\newtheorem{thm}{Theorem}[section]
\newtheorem{lemma}[thm]{Lemma}
\newtheorem{prop}[thm]{Proposition}
\newtheorem{conj}[thm]{Conjecture}
\newtheorem{prbl}[thm]{Problem}
\title{A generalized isodiametric problem}
\author{Christos Pelekis\thanks{Department of Computer Science, KU Leuven
Celestijnenlaan 200A,
3001, Belgium, Email: pelekis.chr@gmail.com}}
\begin{document}
\maketitle

\abstract{Fix positive integers $a$ and $b$ such that $a> b\geq 2$ and a positive real $\delta>0$. 
Let $S$ be a planar set of diameter $\delta$ having the  
following property: for every $a$ points in $S$, at least $b$ of them have pairwise distances 
that are all less than or equal to $2$.
What is the maximum Lebesgue measure of $S$? In this paper we investigate this problem.
We discuss the, devious, motivation that leads to its formulation and provide  upper bounds
on the Lebesgue measure of $S$. Our main result is based on a generalisation of a
theorem that is due to Heinrich Jung. 
In certain instances we are able to find the extremal set but the general 
case seems elusive.}\\

\noindent {\emph{Keywords}: Isodiametric problem, Jung's theorem, diameter.}
\section{Motivation, problem statement and main result}

The main purpose of this article is to draw the attention of the reader to a 
particular extremal problem in geometry.
The setting will be the Euclidean plane.  Plane geometry is abundant in extremal problems. 
Such problems have been studied extensively throughout history and have been generalised in many ways.
An instance of an extremal problem in plane geometry is the so-called \emph{isodiametric problem}: 
Fix a positive real, $\Delta$. 
Among all measurable sets in the plane whose
diameter equals $\Delta$, find one that has maximum Lebesgue measure. 
This is a well known problem whose solution implies that disks maximise Lebesgue measure 
under restriction on the diameter.
We refer the reader to the excellent book of Peter 
Gruber \cite{Gruber} for details and 
references.
In this paper we address a generalisation of the isodiametric problem. Let us begin
with the motivating question that leads to its formulation and, on the same time, 
introduce some notation 
that will be fixed throughout the text. 
Here and later, we denote by $D(p,r)$  the closed disk in the plane with centre at the point 
$p\in\mathbb{R}^2$ and radius $r>0$, i.e., 
$D(p,r)$ consists of all points in the plane that are at distance $\leq r$ from $p$.  

Suppose that you want to poison your mother-in-law.
You baked a pie that has the shape of a closed disk of radius $R>2$; let the disk
$D(O,R)$ denote the pie, where $O$ is the origin of the plane. 
Your mother-in-law is going to eat a circular piece,
strictly contained in the pie, of radius $1$ which is chosen uniformly at random. 
More precisely, she chooses uniformly at random a point $p\in D(O,R-1)$ 
and eats the set $D(p,1)$.
You possess $h\geq 1$ grams of arsenic whose lethal dose is, say, $1$ gram. 
You can distribute the poison in any way you want over the pie. Which distribution of poison has the 
highest probability of doing the old lady?

Let us refer to the previous puzzle as the \emph{poisoning problem}.
It turns out that the poisoning problem gives rise to a generalisation of the isodiametric problem. 
In order to illustrate this,
suppose for a moment that the available amount of poison satisfies $1\leq h < 2$. 
We will refer to a disk of radius $1$ as a \emph{unit disk}  and
to a unit disk that contains at least $1$ gram of poison as a  \emph{lethal} unit disk. 
If $1\leq h < 2$ then for every
distribution of poison over the pie any two lethal unit disks, $U_1,U_2$, must
have non-empty intersection; this is equivalent to saying that 
their centres are at distance $\leq 2$. Now the poisoner has to distribute the arsenic 
in a way that maximises the Lebesgue measure of the set consisting of the centres of lethal disks. 
Additionally, the centres of those disks must have pairwise distances that are less than or equal to $2$.
What is the maximum Lebesgue measure of a set of points in the plane
any two of which are at distance $\leq 2$?
Now, any two points of a set are at distance $\leq 2$ if and only if
the diameter of the set is $\leq 2$. Hence, when $1\leq h<2$, the poisoning problem is
equivalent to the following problem whose solution implies that disks maximise
area under constrains on the diameter. 

\begin{prbl}[Isodiametric problem] Among all planar sets whose diameter is $\leq 2$, find one that
has maximum Lebesgue measure. 
\end{prbl}

The solution is given by a disk of diameter $2$. Hence the Lebesgue measure of a set of diameter 
$\leq 2$ is at most $\pi$; the latter bound is referred to as the \emph{isodiametric inequality} (see \cite{Gruber}). 
Notice that the isodiametric inequality implies that the 
poisoner maximises the probability of doing the mother-in-law  by
putting $1$ gram of arsenic at, say, the centre $O$ of the pie.
More generally, let $a$ be a non-negative integer and suppose that the amount of 
poison, $h$, that is available to the poisoner satisfies 
$a-1\leq h<a$. Then, for any distribution of poison over the pie,
one cannot find $a$ pairwise disjoint lethal unit disks
and so  the centres of the lethal disks form a set having  the property that among any $a$ of its points, 
at least two are at distance $\leq 2$.  The phrase "at least two" in the last sentence could be replaced by 
"at least $b$" and so
we arrive at the formulation of the following.

\begin{prbl}[A generalised isodiametric problem]
\label{probone}
Fix $\delta>0$ and positive integers $a>b$. Assume that $S$ is a planar set of 
diameter $\delta$ having 
the property that
among any $a$ points in $S$, at least $b$ of them are such that 
their pairwise distances are  all
less than or equal to $2$. What is the
maximum Lebesgue measure of $S$? 
\end{prbl}

The constraint on the diameter of $S$ is inspired by the assumption that the pie has radius $R$ 
and so its diameter equals $\delta= 2R$. 
From now on we will only focus on Problem \ref{probone}. 
Let us introduce some extra notation.  
Given positive integers $a>b$, we will say that a subset, $S$, 
of the plane is a $T(a,b)$-\emph{set} if 
among any $a$ points in $S$, at least $b$ of them 
are such that their pairwise distances are all less than or equal to $2$.
In this paper we mainly focus on $T(3,2)$-sets. 
For fixed $\delta>0$ and positive integer $a,b$ such that $a>b\geq 2$, 
we will say that a set $S\subseteq \mathbb{R}^2$ is $T(a,b)$-\emph{extremal} of diameter $\delta$ if it 
is a $T(a,b)$-set of maximal Lebesgue measure 
whose diameter is equal to $\delta$. 
Finally, we will denote by $\lambda_2(S)$ the two-dimensional Lebesgue measure of a set $S\subseteq \mathbb{R}^2$.
The main results of this article are summarised in the following. 

\begin{thm}\label{main} Fix $\delta >0$. Assume that $S\subseteq \mathbb{R}^2$ is a $T(3,2)$-set 
whose diameter equals $\delta$. 
The following hold true.
\begin{enumerate}
\item If $\delta \leq \frac{4}{\sqrt{3}}$, then $\lambda_2(S) \leq \frac{\pi \delta^2}{4}$. A $T(3,2)$-extremal set 
of diameter $\delta$ is 
a disk of diameter $\delta$.
\item  If $\delta \geq 4$, then $\lambda_2(S) \leq 2\pi$. A $T(3,2)$-extremal set of diameter $\delta$
 is the union of two disjoint unit disks. 
\item  If $\frac{4\pi}{3}< \delta < 4$, then $\lambda_2(S) \leq 
\min\left\{\frac{\pi}{6}\frac{\delta^4}{\delta^2-1}+\frac{4\pi}{9}\;,\; 2\pi \right\} $.
\item If  $S$ is \emph{convex}, then 
$\lambda_2(S)\leq  \min\left\{\frac{4\pi}{3\sqrt{3}}\cdot\delta, \; 
2\pi \right\}$, for $\delta>\frac{4}{\sqrt{3}}$.
\item If $S$ is \emph{symmetric}, then $\lambda_2(S)\leq \min\left\{\frac{\pi \delta^2}{6} + \frac{4\pi}{9}, 2\pi \right\}$, for $\delta>\frac{4}{\sqrt{3}}$.
\end{enumerate}
\end{thm}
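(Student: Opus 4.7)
Observe that the $T(3,2)$ condition is equivalent to the statement that $S$ contains no three points with all pairwise distances exceeding $2$; equivalently, the ``far graph'' on $S$, which connects two points iff their distance is strictly greater than $2$, is triangle-free. The central move in every case is to pick $p,q\in S$ with $d(p,q)=\delta$ (or arbitrarily close) and to use the triangle-free property: whenever $\delta>2$, every other $x\in S$ satisfies $\min\{d(x,p),d(x,q)\}\le 2$, since otherwise $\{p,q,x\}$ would form a far triangle. Hence $S\subseteq D(p,2)\cup D(q,2)$. Writing $A'=S\cap D(p,2)\setminus D(q,2)$ and $B'=S\cap D(q,2)\setminus D(p,2)$, the same argument applied to $\{x,y,q\}$ for $x,y\in A'$ shows that $A'$ has diameter at most $2$, and similarly for $B'$.

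For Part~1 ($\delta\le 4/\sqrt{3}$) the bound is the classical isodiametric inequality, and a disk of diameter $\delta$ is extremal because it is itself a $T(3,2)$-set: the largest equilateral triangle inscribed in such a disk has side $\delta\sqrt{3}/2\le 2$, so no triple of its points has all pairwise distances exceeding $2$. For Part~2 ($\delta\ge 4$) the disks $D(p,2)$ and $D(q,2)$ are essentially disjoint, hence $S=A'\cup B'$ up to a null set; applying the isodiametric inequality to each piece gives $\lambda_2(S)\le 2\pi$, with equality for two disjoint unit disks whose centres are at distance $\delta-2$.

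For Part~3 ($4/\sqrt{3}<\delta<4$) the same decomposition gives $S\subseteq D(p,2)\cup D(q,2)$, but now the lens $D(p,2)\cap D(q,2)$ is substantial and must be estimated. To obtain the sharp formula $\pi\delta^4/[6(\delta^2-1)]+4\pi/9$ I would invoke the generalisation of Jung's theorem announced in the abstract, producing a covering of $S$ by two disks whose radii are tuned to $\delta$ so as to balance the ``cluster'' sizes against the overlap; the alternative bound $2\pi$ survives because the two-disk cover above still applies. For Part~5, central symmetry about $O$ additionally forces $p,q$ to be antipodal and $S\subseteq D(O,\delta/2)$, confining the middle contribution to $D(p,2)\cap D(q,2)\cap D(O,\delta/2)$ and yielding the refined bound. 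For Part~4, convexity forces the segment $pq$ to lie in $S$, and each chord of $S$ perpendicular to $pq$ has length controlled by a Jung-type slice estimate applied to the triple consisting of the chord endpoints together with $p$ or $q$; integration along $pq$ then gives a bound linear in $\delta$.

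The main obstacle, and where I expect the real work to lie, is Part~3: the naive estimate $\lambda_2(S)\le \lambda_2(A')+\lambda_2(B')+\lambda_2\bigl(D(p,2)\cap D(q,2)\bigr)$ is visibly too loose, so the exact expression requires the quantitative packing trade-off provided by the generalised Jung theorem. Parts~4 and 5 should then amount to specialising this covering argument under convexity or symmetry hypotheses, and Parts~1 and 2 are the boundary regimes where standard isodiametric tools already suffice.
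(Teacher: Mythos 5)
Parts 1 and 2 of your proposal are essentially the paper's own argument (the far-triangle covering $S\subseteq D(p,2)\cup D(q,2)$ plus the isodiametric inequality; the paper happens to derive Part 1 via Brunn--Minkowski, but the bound is the same). Parts 3--5, however, are only plans, and the plans as stated would not produce the claimed bounds. The engine of the paper's Part 3 consists of two ingredients that are absent from your sketch: (i) a generalised Jung theorem --- if $\text{diam}(S)=\delta$ and the $3$-diameter $\text{diam}_3(S)=\sup_{\{p_1,p_2,p_3\}\subseteq S}\min_{i\neq j}|p_i-p_j|$ is at most $2$ (which is exactly the $T(3,2)$ condition), then $S$ lies in a \emph{single} disk of radius $\rho=\tfrac{1}{2}\delta^2/\sqrt{\delta^2-1}$ --- and (ii) a circle lemma: a $T(3,2)$-subset of a circle of radius $r>2/\sqrt{3}$ has one-dimensional measure at most $\tfrac{2}{3}\cdot 2\pi r$ (rotate an inscribed equilateral triangle of side $>2$; each rotated copy meets the set in at most two of its three vertices). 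Integrating in polar coordinates about the centre of that Jung disk, using the trivial bound $2\pi r$ for $r\le 2/\sqrt{3}$ and the $\tfrac{2}{3}$-density bound for $2/\sqrt{3}<r\le\rho$, gives precisely $\tfrac{\pi}{6}\tfrac{\delta^4}{\delta^2-1}+\tfrac{4\pi}{9}$. Your proposed ``covering by two disks tuned to $\delta$'' has no mechanism that could produce either the $\tfrac23$ density factor or the $\tfrac{4\pi}{9}$ term, and you yourself flag the lens as the unresolved obstacle; so Part 3 is a genuine gap, not a routine completion.

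Parts 4 and 5 have the same problem. The paper's Part 4 is not a chord integration: it observes that in a convex $T(3,2)$-set every inscribed triangle has two vertices at distance $\le 2$ and the third at distance $\le\delta$ from that side, hence area $\le\delta$, and then quotes Blaschke's theorem (a convex set whose largest inscribed triangle has area at most $A$ has measure at most $\tfrac{4\pi}{3\sqrt{3}}A$), which is the sole source of the constant $\tfrac{4\pi}{3\sqrt 3}$. Your ``Jung-type slice estimate'' does not even bound a chord: for a chord $xy$ perpendicular to $pq$, applying the $T(3,2)$ condition to $\{x,y,p\}$ only forces one endpoint to be near $p$ when $|xy|>2$, so the chord length is not controlled, and no ad hoc integration along $pq$ will recover Blaschke's constant. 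For Part 5 your observation $S\subseteq D(O,\delta/2)$ is correct and is the paper's, but the bound $\tfrac{\pi\delta^2}{6}+\tfrac{4\pi}{9}$ again comes from integrating the circle lemma over $0\le r\le \delta/2$; your suggested estimate via $\lambda_2(A')+\lambda_2(B')+\lambda_2\bigl(D(p,2)\cap D(q,2)\cap D(O,\delta/2)\bigr)$ already exceeds $2\pi$ before the lens term and cannot yield the stated refinement. A minor point on Part 1: to see that a disk of diameter $\delta\le 4/\sqrt{3}$ is itself a $T(3,2)$-set you need that any three points of a disk of radius $r$ contain a pair at distance $\le r\sqrt{3}$; the fact that the largest \emph{inscribed equilateral} triangle has side $\le 2$ does not by itself rule out a non-equilateral triple with all sides $>2$, so your stated reason needs this extra (easy, but real) step.
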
 

We break the proofs of the above statements into several sub-theorems that are dispersed  throughout the paper. 
In particular, in Section \ref{general} we  provide proofs of the first three statements. 
The proofs of the first two statements are quite simple and the proof of the third one is 
based on a generalisation of a theorem that is due to Heinrich Jung (see \cite{Jung}). 
In Section \ref{convex} we prove the fourth statement which is obtained as a direct application 
of a theorem that is due to Wilhelm Blaschke. In Section \ref{symmetric} we recall the definition 
of a symmetric set and  prove the final statement, which is
an application of polar coordinates. 
In Section \ref{abproperty}  we find
$T(a,2)$-extremal sets, where $a\geq 3$, and we use this result to obtain sets that are
$T(N_b,b)$-extremal, where $N_b= (b-1)a-b+2$, for $a\geq 3$, and $b\geq 2$.
Finally, in Section \ref{conjecture}, we state some 
open problems.

\section{Measurable sets}\label{general}

In this section we prove the first three statements of Theorem \ref{main}. 
The proof of the third statement will be based on a generalisation of theorem that is due to Heinrich Jung. 
We will also formulate and employ a 
corresponding problem on the circle. 
Let us begin with the following, well known, result that will be used in the proof of the first statement.

\begin{thm}[Brunn-Minkowski inequality] Let $S_1$ and $S_2$ be non-empty measurable sets in $\mathbb{R}^2$. 
Then
\[ \lambda_2(S_1+S_2)^{1/2} \geq \lambda_2(S_1)^{1/2} + \lambda_2(S_2)^{1/2} , \]
where $S_1+S_2=\{s_1+s_2: s_1\in S_1,s_2\in S_2\}$.
\end{thm}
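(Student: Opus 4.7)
The Brunn-Minkowski inequality in the plane is classical; the approach I would follow is the induction-on-boxes argument, carried out in three stages: a pair of axis-parallel rectangles, a pair of finite unions of such rectangles, and, finally, arbitrary measurable sets by approximation.

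First, the base case. For rectangles $S_1=[0,a_1]\times[0,a_2]$ and $S_2=[0,b_1]\times[0,b_2]$, the Minkowski sum is the rectangle $[0,a_1+b_1]\times[0,a_2+b_2]$, and the desired inequality reduces, after squaring, to $a_1 b_2+a_2 b_1\geq 2\sqrt{a_1 a_2 b_1 b_2}$, which is AM-GM.

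Second, the inductive step. Suppose $S_1,S_2$ are finite unions of axis-parallel rectangles with pairwise-disjoint interiors, and the total number of rectangles is at least three; without loss of generality, $S_1$ contains at least two. Because the rectangles in $S_1$ are interior-disjoint, I can find an axis-aligned hyperplane $H=\{x_1=c\}$ that separates $S_1$ into two non-empty parts $S_1=S_1^-\cup S_1^+$ without cutting through the interior of any rectangle, so each part contains strictly fewer rectangles than $S_1$. Next, I translate $S_2$ along the $x_1$-axis by the unique amount (supplied by continuity) that forces the analogous decomposition $S_2=S_2^-\cup S_2^+$ to match the proportion $\theta:=\lambda_2(S_1^+)/\lambda_2(S_1)=\lambda_2(S_2^+)/\lambda_2(S_2)$. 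Since $S_1^\pm+S_2^\pm$ lie on opposite sides of a translate of $H$, they have disjoint interiors inside $S_1+S_2$, so $\lambda_2(S_1+S_2)\geq\lambda_2(S_1^-+S_2^-)+\lambda_2(S_1^++S_2^+)$. Applying the inductive hypothesis to each summand, expanding the two squares, and using $\theta+(1-\theta)=1$, the right-hand side regroups exactly into $\bigl(\sqrt{\lambda_2(S_1)}+\sqrt{\lambda_2(S_2)}\bigr)^2$, as required.

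Third, the extension to arbitrary non-empty measurable sets of finite measure follows from inner regularity: pick sequences of finite unions of rectangles $S_i^{(n)}\subseteq S_i$ with $\lambda_2(S_i^{(n)})\uparrow\lambda_2(S_i)$, apply the inequality to each pair, and pass to the limit using $S_1^{(n)}+S_2^{(n)}\subseteq S_1+S_2$. The infinite-measure case is trivial.

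The step I expect to be the main obstacle is the inductive one: verifying that the required hyperplane (which avoids every rectangle's interior while keeping rectangles on both sides) really does exist for any finite family of interior-disjoint rectangles, and checking that the algebra, after the inductive hypothesis is applied separately on each side, actually reassembles into the square of a sum rather than something weaker. A subsidiary technical subtlety is that the Minkowski sum of two arbitrary Lebesgue-measurable sets need not itself be Lebesgue-measurable, so the limiting argument should be performed with closed (hence Borel) approximants, after which the general measurable case is recovered by regularity.
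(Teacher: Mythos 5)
The paper does not actually prove this statement; it is quoted as a classical result with a pointer to Gruber and Matou\v{s}ek, and the argument you give is precisely the Hadwiger--Ohmann induction-on-boxes proof presented in those references, so there is nothing to compare at the level of approach. Your base case, the volume-fraction-matching translation of $S_2$, the superadditivity across the translated hyperplane, the regrouping via $\theta+(1-\theta)=1$, and the compact-approximation step are all correct.

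The one genuine flaw is the claim you yourself flag as the likely obstacle: it is \emph{not} true that a finite family of interior-disjoint rectangles always admits an axis-parallel line keeping rectangles on both sides while avoiding every rectangle's interior. A pinwheel configuration (four rectangles wound around a central square, e.g.\ $[0,2]\times[0,1]$, $[2,3]\times[0,2]$, $[1,3]\times[2,3]$, $[0,1]\times[1,3]$, $[1,2]\times[1,2]$) defeats every such line. Fortunately the induction does not need this: choose two particular rectangles $R,R'$ of $S_1$; since their interiors are disjoint, some coordinate hyperplane $\{x_i=c\}$ has $R$ entirely on one side and $R'$ entirely on the other. Cut along it anyway, through the interiors of whatever other rectangles it meets --- an axis-parallel cut of a box is again a union of boxes, and each of $S_1^{\pm}$ omits at least one of $R,R'$ entirely, so the \emph{total} number of boxes in each of the two sub-problems $(S_1^{-},S_2^{-})$ and $(S_1^{+},S_2^{+})$ strictly drops. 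With the induction run on that total count, the rest of your argument goes through verbatim. Your handling of the measurability of $S_1+S_2$ via compact approximants is the standard and correct resolution of that subtlety.
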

\begin{proof} See \cite{Gruber} or \cite{Matousek}.
\end{proof}

The first statement of Theorem \ref{main} is obtained as an 
application of the Brunn-Minkowski inequality. 

\begin{thm} Let $S$ be a planar $T(3,2)$-set of diameter $\delta$. 
Suppose that $0<\delta \leq \frac{4}{\sqrt{3}}$. 
Then $\lambda_2(S)\leq \pi \left(\frac{\delta}{2}\right)^2$. The bound is attained by
the measure of a disk of diameter $\delta$. 
\end{thm}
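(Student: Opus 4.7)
The plan is to extract the bound from the Brunn--Minkowski inequality applied to the difference body $S+(-S)$, and then to check that the closed disk of diameter $\delta$ realises it.

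I would first observe that the diameter condition gives $|s_1-s_2|\leq \delta$ for every $s_1,s_2\in S$, so $S+(-S)\subseteq D(O,\delta)$ and therefore
\[
\lambda_2\bigl(S+(-S)\bigr)\leq \pi\delta^2.
\]
Since $\lambda_2(-S)=\lambda_2(S)$, the Brunn--Minkowski inequality applied to $S$ and $-S$ yields
\[
2\lambda_2(S)^{1/2}\leq \lambda_2\bigl(S+(-S)\bigr)^{1/2}\leq \pi^{1/2}\delta,
\]
and rearranging gives the claimed bound $\lambda_2(S)\leq \pi\delta^2/4$. Note that this half of the argument is just the classical isodiametric inequality and makes use of neither the $T(3,2)$ hypothesis nor the restriction $\delta\leq 4/\sqrt{3}$.

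The role of the hypothesis $\delta\leq 4/\sqrt{3}$ is to guarantee that the closed disk of diameter $\delta$, which obviously has area $\pi(\delta/2)^2$, is itself a $T(3,2)$-set and so qualifies as an extremal configuration. For this I would invoke the standard extremal fact that among any three points lying in a closed disk of radius $r$, the minimum of their pairwise distances is at most $r\sqrt{3}$, with equality for the vertices of the inscribed equilateral triangle. Applied with $r=\delta/2\leq 2/\sqrt{3}$ this gives a minimum pairwise distance of at most $2$, so every triple of points in the disk contains a pair within distance $2$, which is exactly the $T(3,2)$-property with $b=2$.

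I do not anticipate any serious obstacle. The Brunn--Minkowski step is textbook, and the only mildly non-trivial ingredient is the inscribed-equilateral-triangle bound, which reduces to the one-variable optimisation of $2r\sin((\theta_j-\theta_i)/2)$ over three angles on the bounding circle and is maximised, by symmetry, at the equispaced configuration.
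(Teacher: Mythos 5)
Your argument is correct and is essentially identical to the paper's: the upper bound comes from applying Brunn--Minkowski to $S$ and $-S$ together with the inclusion $S-S\subseteq D(O,\delta)$, and the hypothesis $\delta\leq 4/\sqrt{3}$ is used only to certify that the disk of diameter $\delta$ is itself a $T(3,2)$-set (the paper phrases this as the disk of diameter $4/\sqrt{3}$ being the smallest disk containing an equilateral triangle of side $2$, which is the same fact as your inscribed-triangle bound $r\sqrt{3}$). No gaps to report.
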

\begin{proof} If $0<\delta\leq 2$, then the result follows from the isodiametric problem.
So suppose that $2<\delta\leq \frac{4}{\sqrt{3}}$. 
It is not difficult to see that the smallest disk that contains
an equilateral triangle of side length $2$ is a disk of diamter $\frac{4}{\sqrt{3}}$. 
Clearly, such a disk is a $T(3,2)$-set and so the same holds for every disk of diameter
$\delta \leq \frac{4}{\sqrt{3}}$.
If $S$ is set of diameter $\delta$ then Brunn-Minkowski inequality implies 
\[ 4\lambda_2(S) \leq \lambda_2(S-S) , \]
where $S-S:=\{s_1-s_2: s_1,s_2\in S\}$. Now the hypothesis that
the diameter of $S$ is $\leq \delta$ implies that $S-S\subseteq D(0,\delta)$ and so
\[  4\lambda_2(S) \leq \lambda_2(S-S) \leq \pi \delta^2 \; \Rightarrow \; \lambda_2(S)\leq \pi \left(\frac{\delta}{2}\right)^2 . \]
Since the latter bound is attainable by a disk of diameter $\delta$, the result follows.
\end{proof}

We now proceed with the proof of the second statement. 

\begin{thm}
\label{mainprop} Fix $\delta\geq 4$. Suppose that $S\subseteq \mathbb{R}^2$ is a $T(3,2)$-set with the
whose diameter equals $\delta$.
Then  $\lambda_2(S)\leq 2\pi$. A $T(3,2)$-extremal set of diameter $\delta$ is
a union of two disjoint unit disks. 
\end{thm}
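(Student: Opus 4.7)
The plan is to reduce, via the hypothesis $\delta \geq 4$, to two ``clusters'' each of diameter at most $2$, and then invoke the classical isodiametric inequality on each. I would first replace $S$ by its closure, which preserves both the diameter and the $T(3,2)$-property (by continuity of the distance and a pigeonhole argument on the three possible ``close'' pairs) and cannot decrease the measure, so that the diameter is realised by two points $p_1, p_2 \in S$ with $\|p_1-p_2\|=\delta$. Since $\delta \geq 4 > 2$, the pair $p_1, p_2$ is ``far'' (distance greater than $2$), and applying the $T(3,2)$-condition to every triple $\{p_1, p_2, q\}$ with $q \in S$ forces $q \in D(p_1, 2) \cup D(p_2, 2)$. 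Moreover, such a $q$ cannot lie in both of the ``far-from'' sets $N_i := \{x \in S : \|x-p_i\|>2\}$, so $N_1 \cap N_2 = \emptyset$. Since $D(p_1, 2) \cap D(p_2, 2)$ contains at most a single point when $\delta \geq 4$, we have $S = N_1 \cup N_2$ up to a set of measure zero.

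The crucial structural step is to show $\mathrm{diam}(N_1) \leq 2$: for any $q, q' \in N_1$, both $p_1$-$q$ and $p_1$-$q'$ are far, so the $T(3,2)$-condition applied to $\{p_1, q, q'\}$ forces $\|q-q'\|\leq 2$, and symmetrically $\mathrm{diam}(N_2) \leq 2$. The isodiametric inequality (the solution to Problem~1 in the introduction) then gives $\lambda_2(N_1), \lambda_2(N_2) \leq \pi$ and hence $\lambda_2(S) \leq 2\pi$. For extremality I would exhibit $S^\ast = D(c_1, 1) \cup D(c_2, 1)$ with $\|c_1 - c_2\| = \delta - 2$, and verify directly that it is a $T(3,2)$-set (any three of its points contain two in a common unit disk, hence at distance $\leq 2$), has diameter exactly $\delta$, and measure $2\pi$.

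The main obstacle is conceptual rather than computational: one has to spot that a diameter-realising pair acts as a \emph{witness} that partitions $S$ into two pieces, each of which is then automatically constrained to have small diameter by the triangle-free structure of the ``far'' relation. Once this observation is in place, the proof reduces to a single application of the planar isodiametric inequality that was already used for the first statement of Theorem \ref{main}.
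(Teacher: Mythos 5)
Your proof is correct and follows essentially the same route as the paper: a diameter-realising pair $p_1,p_2$ forces $S$ into $D(p_1,2)\cup D(p_2,2)$, each piece has diameter at most $2$ because the opposite $p_i$ serves as the far witness in the $T(3,2)$-condition, and the isodiametric inequality applied to each piece gives $2\pi$. Your extra care --- passing to the closure so that the diameter is attained, and discarding the at most one point of $D(p_1,2)\cap D(p_2,2)$ as a null set --- only tightens details the paper leaves implicit.
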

\begin{proof} Choose two points $p_1,p_2\in S$, which are at distance $\delta$.
Since $\delta\geq 4$, it follows that the two disks $D(p_1,2)$ and $D(p_2,2)$ intersect in at most one point.
Now notice that no point of $S$ belongs outside the set $D(p_1,2)\cup D(p_2,2)$ since such a point 
forms with the points $p_1$ and $p_2$ a triangle whose sides have all length $>2$. Hence all points
of $S$ belong  either to $D(p_1,2)$ or to $D(p_1,2)$. 
Notice also that no two points of $S\cap D(p_1,2)$ (resp. of $S\cap D(p_2,2)$)
can be at distance $>2$, because they would form with $p_2$ (resp. with $p_1$)
a triangle whose sides have all length $>2$. Thus any two points in $S \cap D(p_1,2)$ and any two points in 
$S \cap D(p_2,2)$
are at distance $\leq 2$ and so the isodiametric inequality yields
$\lambda_2(S\cap D(p_1,2)),\lambda_2(S\cap D(p_2,2))\leq \pi$.
Since  $D(p_1,2)\cap D(p_2,2)=\emptyset$, we conclude that 
$\lambda_2(S)=\lambda_2(S\cap D(p_1,2))+\lambda_2(S \cap D(p_2,2))\leq 2\pi$, as required.
\end{proof}

We have thus found a $T(3,2)$-extremal set of diameter $\delta \in (0,4/\sqrt{3}]\cup [4,\infty)$. 
We were \emph{not} able find a $T(3,2)$-extremal set of diameter $\delta\in (4/\sqrt{3},4)$; we conjecture 
its shape in Section \ref{conjecture}. This leads us in search for upper bounds on 
the Lebesgue measure of a $T(3,2)$-extremal set of diameter $\delta\in (4/\sqrt{3},4)$. 
This is the content of the third statement of Theorem \ref{main}.
In order to prove the third statement we will first prove a generalisation of a result that is due to Heinrich Jung. 
Let us first introduce some extra notation. Given $S\subseteq \mathbb{R}^2$ we will denote by 
$\text{diam}(S)$ the 
diameter of $S$, i.e., the supremum of the distances between two points of $S$. Formally, 
\[ \text{diam}(S) = \sup_{p,q\in S} \; |p-q| , \] 
where $|p-q|$ denotes the distance between the points $p$ and $q$. 
The upcoming result of Jung answers the following question: Suppose that 
$S\subseteq \mathbb{R}^2$ is 
a bounded set. What is the smallest radius of a disk that contains $S$? 

\begin{thm}[Jung, 1901] Let $S\subseteq \mathbb{R}^2$ be bounded. If $\text{diam}(S) =\delta$,
then $S$ is contained in a disk of radius $\frac{\delta}{\sqrt{3}}$.
\end{thm}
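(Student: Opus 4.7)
The plan is to produce a point $c \in \mathbb{R}^2$ such that $|c - p| \leq \delta/\sqrt{3}$ for every $p \in S$; equivalently, a common point of the family of closed disks $\mathcal{F} := \{D(p,\delta/\sqrt{3}) : p \in S\}$. Each disk in $\mathcal{F}$ is compact and convex, so by Helly's theorem in the plane the existence of such a common point reduces to showing that every three members of $\mathcal{F}$ share a point. Translating this back, the core step is: for any three points $p_1,p_2,p_3 \in S$ there exists a point $c$ with $|c - p_i| \leq \delta/\sqrt{3}$ for $i=1,2,3$, or equivalently, every triangle whose sides all have length at most $\delta$ fits inside a disk of radius $\delta/\sqrt{3}$.

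I would establish this three-point claim by splitting on the largest angle of the triangle $p_1p_2p_3$. In the right or obtuse case the smallest enclosing disk has the longest side as diameter, so its radius is at most $\delta/2$, which is strictly less than $\delta/\sqrt{3}$. In the acute case the smallest enclosing disk is the circumscribed disk, whose radius $R$ satisfies $R = a/(2\sin A)$ where $a$ is the side opposite the angle $A$; taking $a$ to be the largest side makes $A$ the largest angle, hence $A \geq \pi/3$, which yields $\sin A \geq \sqrt{3}/2$ and thus $R \leq \delta/\sqrt{3}$. The equilateral triangle of side $\delta$ achieves equality and explains the appearance of $\sqrt{3}$, tying this lemma to the threshold $4/\sqrt{3}$ already present in Theorem \ref{main}.

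Finally I would pass from the three-disk property to a common intersection of the entire family $\mathcal{F}$. Helly's theorem is classically stated for finite collections, but since each disk is compact, the finite intersection property together with a standard compactness argument gives a common point; it is harmless to replace $S$ by its closure, which is compact with the same diameter. Given such a $c$, the inclusion $S \subseteq D(c,\delta/\sqrt{3})$ is immediate. The main obstacle I anticipate lies in the acute case of the triangle lemma: the estimate is tight at the equilateral triangle and depends on correctly coupling the largest side with the largest opposite angle via the law of sines, together with the geometric fact that the smallest enclosing disk of an acute triangle is indeed its circumscribed disk.
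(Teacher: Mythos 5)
Your proposal is correct and complete. Note that the paper does not actually prove this statement---it defers entirely to the literature (\cite{Jung}, \cite{Gruber}, \cite{Szilard})---so there is no internal argument to compare against; what you have written is the standard proof. The two steps are both sound: the reduction to three points via Helly's theorem is legitimate for the infinite family $\{D(p,\delta/\sqrt{3}) : p\in S\}$ because each member is compact and convex, so the finite intersection property upgrades the finite Helly statement to the full family (and degenerate/collinear triples are absorbed into your right-or-obtuse case, where the disk on the longest side as diameter contains the third point by Thales). In the acute case the law of sines $R = a/(2\sin A)$ applied to the largest side $a\leq\delta$, whose opposite angle $A$ lies in $[\pi/3,\pi/2)$, gives $R\leq \delta/\sqrt{3}$, with equality at the equilateral triangle. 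It is worth pointing out that your strategy mirrors exactly what the paper does for its generalisation in Theorem \ref{jung}: there the reduction to three points is carried out via Klee's Helly-type theorem (Theorem \ref{Klee}) and the triangle analysis again splits on whether the triangle is acute. Your law-of-sines computation is cleaner than the rotation argument used in Theorem \ref{jung}, but the rotation argument is needed there because the additional constraint $\text{diam}_3(S)\leq\tau$ forces one to identify the extremal (isosceles) configuration rather than simply bound the largest angle from below.
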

\begin{proof} See Jung \cite{Jung} for the earliest proof of this result, Gruber \cite{Gruber} for the 
high-dimensional 
analogue and  Szil\'ard \cite{Szilard} for an accessible proof of the planar case.
\end{proof}

We remark that Jung's theorem is sharp in the sense that an equilateral triangle of side length $\delta$ 
attains the bound.
We will prove, in a subsequent statement, a generalisation of Jung's result. Before doing so, let us 
recall  the following Helly-type result that is due to Victor Klee.

\begin{thm}[Klee]
\label{Klee} Let $\mathcal{F}$ be a family of compact convex subsets of $\mathbb{R}^2$ containing at 
least $3$ members. Suppose that $K$ is a compact convex subset of $\mathbb{R}^2$ such that 
for each subfamily of $3$ sets in $\mathcal{F}$ there exists a translate of $K$ that contains all $3$ of them.
Then there exists a translate of $K$ that contains all members of $\mathcal{F}$.
\end{thm}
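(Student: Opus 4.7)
The plan is to convert the hypothesis about existence of covering translates into a standard intersection hypothesis, and then appeal to Helly's theorem in $\mathbb{R}^2$. The key object is the ``configuration space'' parametrizing the admissible translation vectors.

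First I would, for each $A \in \mathcal{F}$, define
\[ C_A = \{t \in \mathbb{R}^2 : A \subseteq K + t\}. \]
The elementary equivalence $a \in K + t \Leftrightarrow t \in a - K$ yields the representation $C_A = \bigcap_{a \in A}(a - K)$, so $C_A$ is an intersection of translates of the compact convex set $-K$. In particular, $C_A$ is convex (intersection of convex sets) and closed (intersection of closed sets); since $C_A \subseteq a_0 - K$ for any fixed $a_0 \in A$, it is also bounded. Hence each $C_A$ is a compact convex subset of $\mathbb{R}^2$.

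Next I would rephrase the hypothesis in terms of the $C_A$: for any three members $A_1, A_2, A_3 \in \mathcal{F}$, the existence of a single translate of $K$ containing all three is precisely the statement $C_{A_1} \cap C_{A_2} \cap C_{A_3} \neq \emptyset$. Since $\mathcal{F}$ contains at least three sets, each individual $C_A$ is non-empty (take two other members and apply the hypothesis). Applying Helly's theorem in $\mathbb{R}^2$ to the family $\{C_A : A \in \mathcal{F}\}$ of compact convex sets yields a common point $t^{\ast} \in \bigcap_{A \in \mathcal{F}} C_A$, and then $K + t^{\ast}$ is a translate of $K$ that contains every member of $\mathcal{F}$, as required.

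The only point requiring care is that $\mathcal{F}$ may be infinite, so one is invoking Helly's theorem for an infinite family. The hard part of the argument is packaged here: finite Helly upgrades the three-at-a-time hypothesis to the finite intersection property for $\{C_A\}_{A \in \mathcal{F}}$, and compactness of the $C_A$ (ultimately inherited from compactness of $K$) promotes the finite intersection property to a non-empty global intersection. This is the only step at which the compactness assumption on $K$ plays an essential role; everything else in the reduction uses only convexity.
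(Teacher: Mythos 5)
Your proposal is correct. Note, however, that the paper does not actually prove this statement: it simply cites Lay, \emph{Convex Sets and Their Applications}, Theorem 6.5, so there is no internal argument to compare against. What you have written is a self-contained proof via the standard reduction: the set $C_A=\{t: A\subseteq K+t\}=\bigcap_{a\in A}(a-K)$ of admissible translation vectors is compact and convex, the three-at-a-time covering hypothesis becomes the three-at-a-time intersection hypothesis for $\{C_A\}$, and Helly's theorem in the plane (finite Helly plus the finite intersection property for compact sets, to handle infinite $\mathcal{F}$) produces a common translation vector. This is essentially the textbook proof of the result you are citing, and all the steps check out. Two minor points worth making explicit: the compactness of $C_A$ via $C_A\subseteq a_0-K$ presupposes $A\neq\emptyset$ (empty members of $\mathcal{F}$ are covered by any translate, so they can be discarded at the outset); and the nonemptiness of each $C_A$ and of each pairwise intersection $C_{A_1}\cap C_{A_2}$, which finite Helly needs for subfamilies of size one and two, follows from the triple hypothesis precisely because $\mathcal{F}$ has at least three members, as you observe.
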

\begin{proof} See \cite{Lay}, Theorem $6.5$.
\end{proof}

The proof of the third statement of Theorem \ref{main} will be based upon the following 
generalisation of the notion of diameter of a planar set.
Define the $3$-\emph{diameter} of a set $S\subseteq \mathbb{R}^2$ by setting
\[ \text{diam}_3(S) = \sup_{\{p_1,p_2,p_3\}\subseteq S} \; 
\min\left\{ |p_1-p_2|,  |p_2-p_3|, |p_1-p_3|\right\} . \]
Note that a set $S\subseteq \mathbb{R}^2$
is a  $T(3,2)$-set if and only if $\text{diam}_3(S) \leq 2$. Notice also that 
$\text{diam}_3(S) \leq \text{diam}(S)$, for any $S\subseteq \mathbb{R}^2$.
The next result is an extension of Jung's theorem  
that takes into account additional information on the $3$-diameter of the set.

\begin{figure}[htbp]
   \centering
   \includegraphics[width=0.80\textwidth]{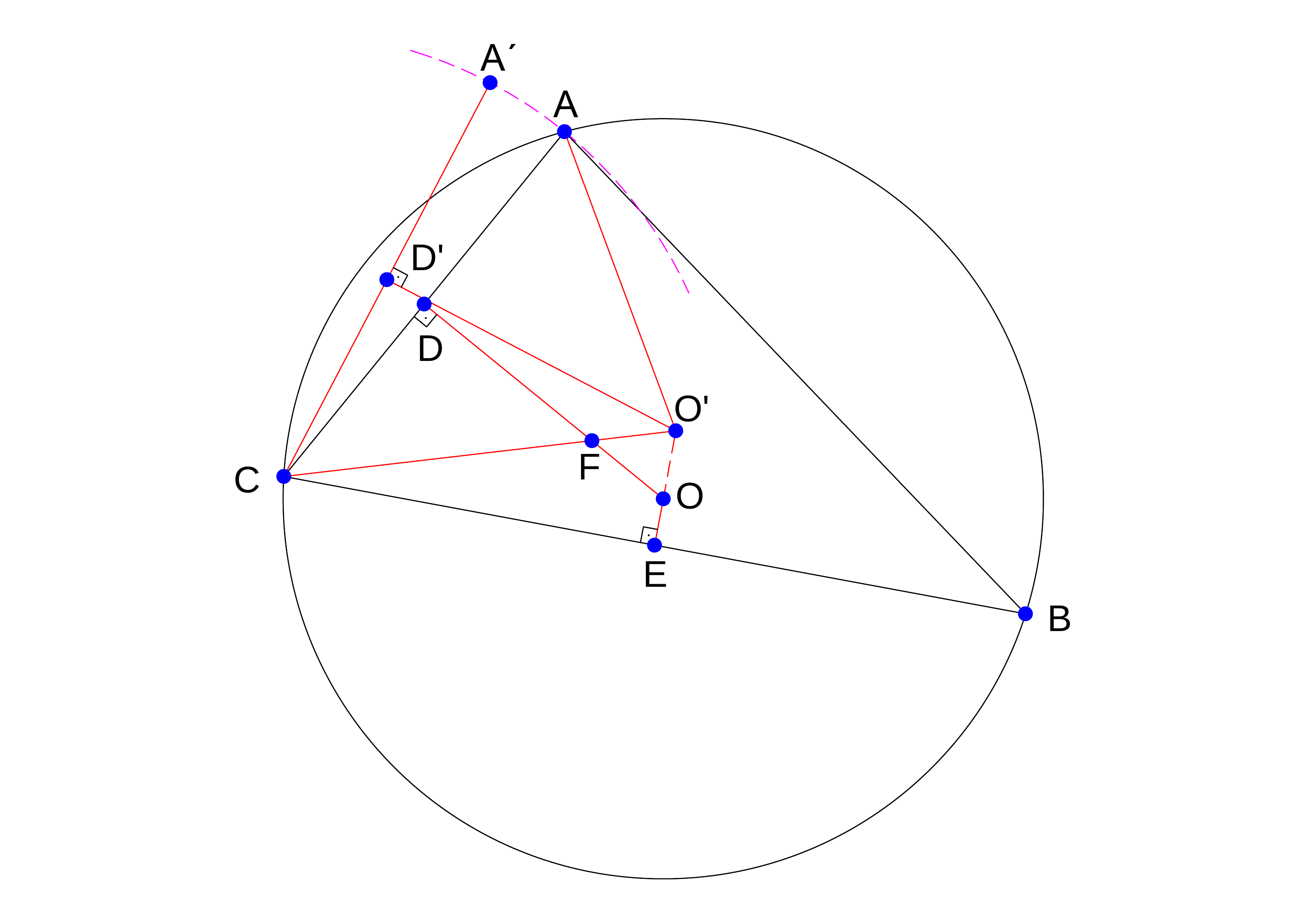} 
   \caption{\small{Points $A,B,C$ in a set, $S$, with $\text{diam}(S)=\delta$ and $\text{diam}_3(S)=\tau$}}
   \label{fig:example}
\end{figure}

\begin{thm}\label{jung} Suppose that $S\subseteq \mathbb{R}^2$ is bounded. 
If $\text{diam}(S) =\delta$ and $\text{diam}_3(S)=\tau \leq \delta$,
then $S$ is contained in a disk of radius 
$\rho:=\frac{1}{2}\frac{\delta^2}{\sqrt{\delta^2 - \frac{\tau^2}{4}}}$.
\end{thm}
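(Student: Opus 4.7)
The plan is to reduce the assertion to a three-point problem via Klee's theorem (Theorem \ref{Klee}). Apply that theorem to the family $\mathcal{F} = \{\{p\} : p \in S\}$ of singletons with $K$ the closed disk of radius $\rho$ centred at the origin; its conclusion then reads precisely as ``there is a disk of radius $\rho$ containing $S$''. (The case $|S| < 3$ is trivial once one observes $\rho \geq \delta/2$, which is immediate from the formula.) Hence it suffices to prove that every triple $p_1, p_2, p_3 \in S$ is contained in some disk of radius $\rho$.

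Fix such a triple. The hypothesis $\text{diam}_3(S) = \tau$ forces at least one pairwise distance, say $c := |p_1 - p_2|$, to be at most $\tau$, while the other two lengths $a := |p_2 - p_3|$ and $b := |p_1 - p_3|$ are bounded by $\delta$. Recall that the smallest disk enclosing a triangle is the circumscribed disk when the triangle is acute, and otherwise has its longest side as diameter. In the non-acute case the enclosing radius is therefore at most $\delta/2$, and since $\rho \geq \delta/2$ there is nothing to verify.

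The acute case is the main computation. By the law of sines $R = c/(2\sin C)$, where $C$ is the angle opposite the side of length $c$, and by the law of cosines $\cos C = (a^2 + b^2 - c^2)/(2ab)$. For an acute triangle the partial derivatives $\partial(\cos C)/\partial a = (\cos B)/b$ and $\partial(\cos C)/\partial b = (\cos A)/a$ are both strictly positive, so $\cos C$ is increasing in each of $a$ and $b$; equivalently $\sin C$ decreases and $R$ grows as $a$ or $b$ grows. Hence $R$ is maximised on the boundary $a = b = \delta$. Substituting gives $\cos C = 1 - c^2/(2\delta^2)$, so a direct simplification yields $\sin^2 C = c^2(4\delta^2 - c^2)/(4\delta^4)$ and therefore $R = \delta^2/\sqrt{4\delta^2 - c^2}$. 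This expression is increasing in $c$, so the maximum over the admissible range $c \leq \tau$ is attained at $c = \tau$, giving $R = \delta^2/\sqrt{4\delta^2 - \tau^2}$, which equals $\rho$.

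The step I expect to require the most care is the optimisation in the acute case, and in particular the verification that the extremal configuration $a = b = \delta$, $c = \tau$ is itself acute (so the computed circumradius really is the enclosing radius): this is easy since then $\cos A = \cos B = \tau/(2\delta) > 0$ and $\cos C = 1 - \tau^2/(2\delta^2) > 0$ whenever $\tau \leq \delta$. With the three-point bound established, Klee's theorem closes the argument and exhibits a single disk of radius $\rho$ covering all of $S$.
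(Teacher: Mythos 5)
Your reduction via Klee's theorem and your dispatch of the non-acute case coincide with the paper's; in the acute case you take a genuinely different route (a trigonometric optimisation via the laws of sines and cosines, versus the paper's synthetic argument that rotates a vertex to compare with an isosceles triangle's circumcircle). The target of your optimisation, namely $\cos C\leq 1-c^2/(2\delta^2)$ for every acute triangle with $a,b\leq\delta$, is in fact true, and your endgame (the formula $R=\delta^2/\sqrt{4\delta^2-c^2}$, its monotonicity in $c$, and the acuteness check at $a=b=\delta$, $c=\tau$) is correct. But the step ``hence $R$ is maximised on the boundary $a=b=\delta$'' has a genuine gap. The sign statements $\partial(\cos C)/\partial a\propto\cos B>0$ and $\partial(\cos C)/\partial b\propto\cos A>0$ (your formulas are off by the harmless positive factor $c/a$, resp.\ $c/b$) are valid only while the triangle \emph{remains} acute, and the box $[a_0,\delta]\times[b_0,\delta]$ is not contained in the acute region. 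Concretely, take $\delta=1$, $c=0.6$ and the acute triangle $(a_0,b_0)=(0.8,0.6)$, where $\cos C=2/3$. Raising $a$ to $1$ gives the obtuse triangle $(1,0.6,0.6)$ with $\cos C=25/30\approx0.833$, which already \emph{exceeds} the target value $1-c^2/2=0.82$ attained at $(1,1)$; on the second leg $\cos A<0$, so $\cos C$ \emph{decreases} back down to $0.82$. Thus $\cos C$ is not monotone along the path and the corner is not the maximum of $\cos C$ over the box, so the inequality you need cannot be read off from the signs of the partials alone.

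The conclusion is salvageable without the path argument. The inequality $\cos C\leq 1-c^2/(2\delta^2)$ is equivalent, after clearing denominators, to $\delta^2(a-b)^2\leq(\delta^2-ab)\,c^2$. Assuming $a\geq b$, acuteness of the angle opposite $a$ gives $c^2\geq a^2-b^2=(a-b)(a+b)$, and since $a,b\leq\delta$ one has $(\delta^2-ab)(a+b)\geq\delta^2(a-b)$ (this rearranges to $a(a+b)\leq2\delta^2$, clear from $a\leq\delta$ and $a+b\leq2\delta$); multiplying the two yields the claim. Since both $C$ and the extremal angle lie in $(0,\pi/2)$, the bound on $\cos C$ gives the bound on $R=c/(2\sin C)$, and the rest of your argument goes through. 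With this one substitution your proof is complete, and it is arguably shorter than the paper's two-case rotation argument.
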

\begin{proof} If $\tau = \delta$ then the result reduces to Jung's; so we may assume $\tau < \delta$.
It is not difficult to see that $\rho$ is the radius of the disk that 
circumscribes an isosceles triangle whose equal sides have length $\delta$ and the other side 
has length $\tau$; therefore the result is sharp. Theorem \ref{Klee} implies that it is enough to
show that given any three points  $A,B,C$ of $S$, there exists a disks of 
radius $\rho$ that covers the points $A,B,C$. Fix three points $A,B,C$ from $S$.
If the triangle $ABC$ is obtuse angled, right angled or 
degenerate, then it is easily seen that 
the circle whose diameter is the largest side of $ABC$ covers the triangle and 
the radius of such a circle is easily seen to be $\frac{\delta}{2}< \rho$. So we may assume that the 
triangle $ABC$ is acute angled;
see Figure \ref{fig:example}. Assume, without loss of generality, that the angle $\hat{A}$ is maximal 
and that the length, $t$, of the side $AC$ is minimal; hence the assumption $\text{diam}_3(S)=\tau$
implies that $t\leq \tau$.
Now let $\ell_1$ be the length of the segment $AB$, $\ell_2$ 
the length of the segment $BC$ and notice that $t\leq \ell_1\leq \ell_2\leq \delta$. 
We may also assume that $\ell_2 > \tau$, since if this is not the case, then  
the points $A,B,C$ 
are contained in a disk of radius 
$\frac{\tau}{\sqrt{3}}=\frac{1}{2}\frac{\tau^2}{\sqrt{\tau^2 - \tau^2/4}}$;
since  the function $f(x)= \frac{x^2}{\sqrt{x^{2} -\tau^2/4} }$ is 
increasing when $x\geq \tau$ it follows that the later quantity is the less than $\rho$.
Let $O$ be the centre of the circle that circumscribes the triangle $ABC$
and $r$ its radius.
Let also $OE$ be the perpendicular bisector of the side $BC$. 
There are two cases to consider: either $\ell_1 \geq \tau$ or $\ell_1 < \tau$. \\
(a) Suppose first that $\ell_1 \geq \tau$.
If $\ell_1 =\ell_2$ then it is easy to see that 
 $r = \frac{\ell_1^2}{2\sqrt{\ell_{1}^{2} -t^2/4}} \leq  \frac{\ell_1^2}{2\sqrt{\ell_{1}^{2} -\tau^2/4}}$ and 
the result follows from the fact that the function $f(x)$ is 
increasing when $x\geq \tau$. So suppose that $\ell_1 < \ell_2$. 
In that case we claim that the points $A,B,C$ are 
contained in a disk of radius  $\frac{\ell_2^2}{2\sqrt{\ell_{2}^{2} -t^2/4} }$ 
and therefore the result follows, as before, from the fact that the function $f(x)$ is increasing when $x\geq \tau$. 
To prove the claim, let $A^{\prime}$ be the point, obtained by rotating the segment $CA$ around $C$
in the counterclockwise direction, for which  
the length of the segment $A^{\prime}B$ is equal to the length of the segment 
$BC$, i.e. equal to $\ell_2$ (see Figure \ref{fig:example}). Since 
$\ell_1 <\ell_2$ it follows that the points $A$ and $A^{\prime}$ are different. 
Let $O^{\prime}$ be the centre of the circle, $\Sigma$, 
that circumscribes the triangle $A^{\prime}BC$ and 
let $D^{\prime}O^{\prime}$ 
be the perpendicular bisector of the side $A^{\prime}C$.  
The proof of the claim will follow once we show that the 
point $A$ belongs to the interior of the circle $\Sigma$. To see this, let $F$ be the point at which the 
segments $CO^{\prime}$ and $DO$ intersect and look at the triangle $AFO^{\prime}$. The triangle 
inequality implies $\lambda_1(AO^{\prime}) \leq \lambda_1(AF) + \lambda_1(O^{\prime}F)$, where 
$\lambda_1(\cdot)$ denotes the length of the segment. Since $F$ belongs to the perpendicular bisector 
of the segment $AC$ it follows $\lambda_1(AF) = \lambda_1(CF)$ and therefore 
\[\lambda_1(AO^{\prime}) \leq \lambda_1(CF) + \lambda_1(O^{\prime}F) = \lambda_1(CO^{\prime}). \]
This implies that the point $A$ belongs to the interior of the circle $\Sigma$ and the claim follows. \\
(b) If $\ell_1 < \tau$ then the argument is similar and so we briefly 
sketch it. Recall that the triangle is assumed to be acute angled 
and notice that in this case we have $t<\tau, \; \tau<\ell_2<2\tau$. 
Notice that the point $A$ is contained in the intersection of the two disks $D(C,\tau)$ and $D(B,\tau)$ and 
that the radius of the circle that circumscribes $ABC$ is maximum when the point $A$ lies on the 
boundary of $D(C,\tau)\cap D(B,\tau)$. 
Suppose that the line that passes through $B$ and $C$ divides the plane in a positive and negative part 
and assume, 
without loss of generality, that 
$A$ lies on the positive part. Let $A^{\prime}$ be the point in the positive part at which the 
boundaries of disks $D(C,\tau)$ and $D(B,\tau)$ intersect and assume that $A\neq A^{\prime}$. 
Let us also assume that the distance between $B$ and $A$ is smaller than 
the distance between $C$ and $A$. 
Now using a similar argument as in (a) one can    
see that the triangle $ABC$ is contained in the circle that circumscribes the  
isosceles triangle $A^{\prime}BC$.
Repeating the argument of part (a) again, i.e. rotate the segment   
$CA^{\prime}$ around $C$
in the counterclockwise direction until you hit a point $A^{\prime \prime}$ such that the length 
of $A^{\prime\prime}B$ equals $\ell_2$, shows that the triangle $A^{\prime}BC$ is contained 
in the disk that circumscribes the triangle $A^{\prime\prime}BC$ 
whose radius equals $\frac{\ell_2^2}{2\sqrt{\ell_{2}^{2} -\tau^2/4} }$.
The result follows. 
\end{proof}

In order to prove the third statement of Theorem \ref{main}, we formulate 
a corresponding isodiametric problem on the circle.

\begin{prbl}[The problem on the circle]
\label{circle}
Fix a real number $r>\frac{2}{\sqrt{3}}$ and let $C(r)$ denote the perimeter of a circle of radius $r$.
Suppose that $B\subseteq C(r)$  is a $T(3,2)$-set. What is the maximum 
($1$-dimensional) Lebesgue measure of $B$?  
\end{prbl}

The next result gives an upper  bound on 
the measure of $B$.

\begin{lemma}
\label{cycle} Let $B$ and $r$ be as in Problem \ref{circle}. Then 
\[ \lambda_1(B) \leq \frac{2}{3}\cdot 2\pi r , \]
where $\lambda_1(\cdot)$ denotes $1$-dimensional Lebesgue measure.
\end{lemma}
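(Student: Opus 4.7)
The plan is to exploit the $T(3,2)$ condition via a simple averaging argument over equilateral triangles inscribed in $C(r)$.

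First I would observe that any three points on $C(r)$ whose angular positions with respect to the centre differ pairwise by $2\pi/3$ form an equilateral triangle of side length $2r\sin(\pi/3) = r\sqrt{3}$. The hypothesis $r > \frac{2}{\sqrt{3}}$ gives $r\sqrt{3}>2$, so such a triple has all three pairwise Euclidean distances strictly greater than $2$. By the $T(3,2)$ property of $B$, no such triple can lie entirely in $B$.

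Next, I would parametrise $C(r)$ by arc length $s \in [0,2\pi r)$ and, reading arguments modulo $2\pi r$, apply the preceding observation to the equilateral triple based at $s$ to obtain the pointwise inequality
\[
\mathbf{1}_B(s) + \mathbf{1}_B\!\left(s+\tfrac{2\pi r}{3}\right) + \mathbf{1}_B\!\left(s+\tfrac{4\pi r}{3}\right) \leq 2, \qquad s\in[0,2\pi r).
\]
Integrating over $s \in [0,2\pi r)$ and using translation invariance of arc length on the circle, each of the three terms on the left contributes exactly $\lambda_1(B)$, so $3\lambda_1(B) \leq 2\cdot 2\pi r$, which is precisely the claimed bound $\lambda_1(B) \leq \tfrac{2}{3}\cdot 2\pi r$.

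I do not foresee any genuine obstacle: the only conceptual step is to identify the inscribed equilateral triangle as the correct ``forbidden'' configuration (which is exactly where the assumption $r>2/\sqrt{3}$ is used), after which the estimate is a one-line rotation average.
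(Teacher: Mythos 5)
Your proof is correct and follows essentially the same route as the paper: both arguments rotate an inscribed equilateral triangle (whose side $r\sqrt{3}>2$ forbids all three vertices from lying in $B$) around $C(r)$ and average. Your version with the pointwise indicator inequality and integration is in fact a cleaner formalisation of the averaging step that the paper only sketches.
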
 
\begin{proof} Fix three points $p_1,p_2,p_3 \in C(r)$ such that the triangle they form is 
an equilateral triangle, $T$, of maximum side length. Since $r>\frac{2}{\sqrt{3}}$, 
all sides of $T$ have length $>2$ and so at most two 
points from the set $\{p_1,p_2,p_3\}$ can belong to $B$. The same 
is true for the three vertices of any triangle, $T_{\varphi}$, obtained 
by rotating the triangle 
$T$ by $\varphi \in (0,120)$ degrees and the vertices of this family of triangles cover the set $C(r)$. 
This implies that the set $B$ can have at most $\frac{2}{3}$ of the total measure of $C(r)$, as required.
\end{proof}

We now have all the necessary tools to prove the third statement of Theorem \ref{main}.

\begin{thm}\label{polar} Suppose that $S\subseteq \mathbb{R}^2$ is a $T(3,2)$-set.
Let $\text{diam}(S)=\delta \in \left(\frac{4\pi}{3}, 4\right)$. Then
\[ \lambda_2(S) \leq \min\left\{\frac{\pi}{6}\frac{\delta^4}{\delta^2-1}+\frac{4\pi}{9}\;,\; 2\pi \right\} . \]
\end{thm}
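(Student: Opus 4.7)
The plan is to apply the generalized Jung theorem (Theorem \ref{jung}) to inscribe $S$ in a disk whose radius is controlled by $\delta$ and by $\tau = \mathrm{diam}_3(S) \le 2$, and then exploit polar coordinates at the disk's centre together with Lemma \ref{cycle} on each circular slice. The interval $(4/\sqrt{3}, 4)$ is the natural range for this strategy, and throughout I work with $\delta$ in that range (as the formula for $\rho$ below implicitly requires).

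Since $S$ is a $T(3,2)$-set, $\mathrm{diam}_3(S) \le 2$, so Theorem \ref{jung} (with $\tau \le 2$) provides a point $c \in \mathbb{R}^2$ with $S \subseteq D(c, \rho)$, where
\[
\rho \;=\; \frac{1}{2}\,\frac{\delta^2}{\sqrt{\delta^2 - 1}}.
\]
With $C(r)$ denoting the circle of radius $r$ centred at $c$ and $\lambda_1$ its arc-length measure, polar coordinates at $c$ give
\[
\lambda_2(S) \;=\; \int_0^{\rho} \lambda_1\bigl(S \cap C(r)\bigr)\, dr.
\]
Each slice $S \cap C(r)$ inherits the $T(3,2)$-property. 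For $r > 2/\sqrt{3}$, I would invoke Lemma \ref{cycle} to obtain $\lambda_1(S \cap C(r)) \le 4\pi r/3$; for $r \le 2/\sqrt{3}$ only the trivial bound $\lambda_1(S \cap C(r)) \le 2\pi r$ is at hand.

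I then split the integral at $r_0 := 2/\sqrt{3}$. The split is non-trivial because the hypothesis $\delta > 4/\sqrt{3}$ forces $\rho > r_0$ (one checks that $3\delta^4 - 16\delta^2 + 16 > 0$ for $\delta > 2$). A direct computation gives
\[
\lambda_2(S) \;\le\; \int_0^{r_0} 2\pi r\, dr + \int_{r_0}^{\rho} \frac{4\pi r}{3}\, dr \;=\; \pi r_0^{2} + \frac{2\pi}{3}\bigl(\rho^{2} - r_0^{2}\bigr) \;=\; \frac{4\pi}{9} + \frac{2\pi \rho^{2}}{3}.
\]
Substituting $\rho^{2} = \delta^{4}/\bigl(4(\delta^{2}-1)\bigr)$ recovers the first entry of the stated minimum.

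For the second entry $2\pi$ I would follow the two-disk argument of Theorem \ref{mainprop}: choose $p_1, p_2 \in S$ realising the diameter, observe that the $T(3,2)$-property forces $S \subseteq D(p_1, 2) \cup D(p_2, 2)$, and apply the isodiametric inequality to each piece. The main obstacle is precisely this last step: for $\delta < 4$ the two disks overlap, so the clean argument that $S \cap D(p_i, 2)$ itself has diameter $\le 2$ breaks down and some extra work is needed to secure $\lambda_2(S) \le 2\pi$ in that regime. The polar-coordinate bound, by contrast, is a direct consequence of Theorem \ref{jung} and Lemma \ref{cycle} and is, I expect, the true content of the theorem.
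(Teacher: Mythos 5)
Your derivation of the first entry of the minimum is exactly the paper's proof: Theorem \ref{jung} with $\tau\le 2$ gives $S\subseteq D(c,\rho)$ with $\rho=\frac{1}{2}\frac{\delta^2}{\sqrt{\delta^2-1}}$, and the polar-coordinate integral split at $2/\sqrt{3}$, with the trivial bound $2\pi r$ on the inner range and Lemma \ref{cycle} on the outer range, yields $\frac{4\pi}{9}+\frac{2\pi\rho^2}{3}=\frac{4\pi}{9}+\frac{\pi}{6}\frac{\delta^4}{\delta^2-1}$. Your arithmetic checks out, and your reading of the interval as $(4/\sqrt{3},4)$ rather than the typographical $(4\pi/3,4)$ is clearly the intended one.

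The gap is the second entry of the minimum, and it is not cosmetic: since $\frac{\pi}{6}\frac{\delta^4}{\delta^2-1}+\frac{4\pi}{9}>2\pi$ once $\delta\gtrsim 2.86$, the bound $2\pi$ is the binding one on a substantial part of the range, and your proposal proves nothing new there. You correctly identify why the two-disk argument of Theorem \ref{mainprop} fails for $\delta<4$: the disks $D(p_1,2)$ and $D(p_2,2)$ overlap, a point $q$ in the lens can be within distance $2$ of both $p_1$ and $p_2$, and so the triple $\{q_1,q_2,p_2\}$ no longer forces $|q_1-q_2|\le 2$ for $q_1,q_2\in S\cap D(p_1,2)$; hence $\mathrm{diam}(S\cap D(p_i,2))\le 2$ is unavailable and the isodiametric inequality cannot be applied to each piece. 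The paper does not repair this argument directly either; it instead asserts that the maximal Lebesgue measure of a $T(3,2)$-set is an increasing function of its diameter, so that the value $2\pi$ obtained at $\delta=4$ in Theorem \ref{mainprop} dominates all smaller diameters. If you want a complete proof you must either supply that monotonicity statement (which itself requires an argument: naively appending a far-away point to increase the diameter can destroy the $T(3,2)$-property whenever $S$ already contains a pair at distance greater than $2$) or find a direct bound of $2\pi$ for $\delta\in(4/\sqrt{3},4)$; as written, your proposal establishes only the first entry of the stated minimum.
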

\begin{proof}  Since $S$ is a  $T(3,2)$-set
it follows that $\text{diam}_3(S) \leq2$.
From Theorem \ref{jung} we know that $S$ is contained 
in a ball of radius $\rho := \frac{1}{2}\frac{\delta^2}{\sqrt{\delta^2-1}}$. Using polar coordinates we write
\[ \lambda_2(S)= \int_{0}^{\rho} \lambda_1(S\cap C(r)) \;dr = 
\int_{0}^{2/\sqrt{3}}+ \int_{2/\sqrt{3}}^{\rho} \lambda_1(S\cap C(r)) \;dr.\]
The first integral is at most 
\[\int_{0}^{2/\sqrt{3}}2\pi r\; dr =\frac{4\pi}{3} . \]
Now $S\cap C(r)$ is clearly a $T(3,2)$-set. Therefore,
using Lemma \ref{cycle}, we conclude that the second integral is
\[ \leq  \int_{2/\sqrt{3}}^{\rho} \;\frac{2}{3}\cdot 2\pi r \;dr =
 \frac{\pi}{6}\frac{\delta^4}{\delta^2-1}-\frac{8\pi}{9}. \] 
Summarising, we have shown that
\[\lambda_2(S) \leq \frac{\pi}{6}\frac{\delta^4}{\delta^2-1}+\frac{4\pi}{9}. \]
The result follows from Theorem \ref{mainprop} and the fact that the maximal measure of 
a $T(3,2)$-set  is an increasing function of the diameter.
\end{proof}

One can easily see that $\frac{\pi}{6}\frac{\delta^4}{\delta^2-1}+\frac{4\pi}{9}\leq 2\pi$, for $\delta\leq 2.85$ 
and so the previous bound is better than the bound of Theorem \ref{mainprop}.
In the next section we consider Problem \ref{probone} under the additional assumption that 
the set $S$ is convex. 

\section{Convex sets}\label{convex}

In this section we consider Problem \ref{probone} under the additional assumption that 
$S$ is a convex set of diameter $\delta\in (\frac{4}{\sqrt{3}}, 4)$. 
In particular, 
we prove the fourth statement of Theorem \ref{main}. 
We begin with some definitions. 
Suppose that $S\subseteq \mathbb{R}^2$ is a \emph{convex} $T(3,2)$-set   
such that $\text{diam}(S)=\delta$.
For any three points $p_1,p_2,p_3\in S$ let us denote by $\text{tr}(p_1,p_2,p_3)$ the 
area of the triangle formed by $p_1,p_2,p_3$. 
Define the \emph{triameter} of $S$, denoted $\text{tr}(S)$, to be the quantity
\[ \text{tr}(S) = \sup_{\{p_1,p_2,p_3\}\subseteq S} \text{tr}(p_1,p_2,p_3) . \]
Now fix three points in $S$, say $p_1,p_2,p_3$.
By assumption we know at least two points from 
the set $\{p_1,p_2,p_3\}$ have distance $\leq 2$. Suppose these points are $p_1$ and $p_2$. 
Then the distance of the point $p_3$ from the line segment joining $p_1$ and $p_2$ is at most $\delta$.
Hence 
\[ \text{tr}(p_1,p_2,p_3) \leq \frac{1}{2} \cdot 2\cdot\delta = \delta . \]
We now recall the following result of Wilhelm Blaschke.

\begin{thm}[Blaschke, 1923]
\label{Blaschke} Let $S$ be a convex subset of the plane such that $\text{tr}(A)=\delta$.
Then 
\[ \lambda_2(S) \leq \frac{2\pi}{3} \cdot \frac{1}{\sin(2\pi/3)} \cdot \delta = \frac{4\pi}{3\sqrt{3}}\delta. \]
\end{thm}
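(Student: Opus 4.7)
The plan is to identify the extremal convex body and verify that it is a disk, using an affine-invariance reduction to fix the shape of the maximum inscribed triangle and then pinning down the extremal by symmetry.

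The ratio $\lambda_2(S)/\text{tr}(S)$ is invariant under any invertible affine map $A\colon \mathbb{R}^2 \to \mathbb{R}^2$, since both $\lambda_2(A(S))$ and $\text{tr}(A(S))$ scale by $|\det A|$. Choose a maximum-area inscribed triangle $T = ABC \subseteq S$ with $\text{area}(T) = \delta$. After applying a suitable affine map I may assume $T$ is equilateral, centred at the origin. The maximality of $T$ now implies $S \subseteq T'$, where $T'$ is the \emph{anti-triangle} of $T$ (the triangle having $T$ as its medial triangle): if some $P \in S$ lay strictly beyond the side of $T'$ through $A$, then $PBC$ would have area larger than $\delta$, contradicting $\text{tr}(S) = \delta$. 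Note that $\text{area}(T') = 4\delta$, so at this point only the weak bound $\lambda_2(S) \leq 4\delta$ is in hand.

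The candidate extremal is the circumscribed disk $D$ of $T$. It satisfies $T \subseteq D \subseteq T'$; a direct calculation gives $\lambda_2(D) = \pi r^2 = \frac{4\pi}{3\sqrt{3}}\delta$, where $r$ is the circumradius of $T$, and $\text{tr}(D) = \delta$ because the maximum-area triangle inscribed in a disk is the equilateral one, realised here by $T$ itself. So it remains to prove that any convex $S$ with $T \subseteq S \subseteq T'$ and $\text{tr}(S) \leq \delta$ satisfies $\lambda_2(S) \leq \lambda_2(D)$.

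To complete this, I would first invoke Blaschke's selection theorem to obtain an area-maximising extremal $S^{\ast}$, then exploit the $3$-fold rotational symmetry of $T$. Specifically, I would argue that a Minkowski-type averaging of $S^{\ast}$ with its two $120^{\circ}$-rotates about the centroid (using Brunn--Minkowski to control the area from below) produces a $3$-fold symmetric extremal whose triameter has not increased. Under this symmetry the boundary of the extremal is a single curve defined on a $120^{\circ}$ sector, and a direct analysis using the condition $\text{tr}(S) \leq \delta$ applied to triples consisting of one boundary point and two opposite vertices of $T$ should force this curve to coincide with the arc of $D$.

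The main obstacle is this last step: converting the global condition $\text{tr}(S) \leq \delta$, which is a statement about \emph{all} triples of points of $S$, into a pointwise constraint on the boundary of the symmetric extremal. Even verifying that the Minkowski averaging above preserves (or only decreases) the triameter is delicate, and may need to be replaced by a different symmetrisation or by a variational argument that perturbs a non-circular boundary arc outward while monitoring the largest inscribed triangle.
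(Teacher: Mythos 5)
The paper does not actually prove this statement; it is quoted as a classical theorem of Blaschke with a pointer to Chakerian and to Pach--Agarwal (Theorem 2.6), so the relevant comparison is between your sketch and a complete proof of the classical result. Your setup is sound and standard: the affine invariance of $\lambda_2(S)/\mathrm{tr}(S)$, the normalisation of a maximal inscribed triangle $T$ to an equilateral one, the containment $S\subseteq T'$ in the anti-medial triangle, and the identification of the circumdisk $D$ of $T$ (which is exactly the indisk of $T'$) as the conjectured extremal with $\lambda_2(D)=\frac{4\pi}{3\sqrt{3}}\delta$ are all correct. But none of this yields the inequality; the entire content of the theorem is concentrated in the step you leave open, namely that a convex $S$ with $T\subseteq S\subseteq T'$ and $\mathrm{tr}(S)\le\delta$ has $\lambda_2(S)\le\lambda_2(D)$, and your proposal does not close it.

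The gap is genuine on two counts. First, the symmetrisation you propose --- Minkowski-averaging $S^{\ast}$ with its $120^{\circ}$-rotates --- is not known (to you or to me) to be triameter-nonincreasing: the area of a triangle with vertices $q_i=\frac{1}{3}(a_i+\rho b_i+\rho^2 c_i)$ expands, via bilinearity of the determinant, into cross terms $\det\bigl(a_j-a_1,\rho(b_k-b_1)\bigr)$ that are not areas of triangles inscribed in $S^{\ast}$'s summands, so there is no obvious comparison with $\mathrm{tr}(S)$; Brunn--Minkowski controls the area from below but says nothing about the inscribed-triangle functional. Second, even granting a $3$-fold symmetric extremal, the passage from the global condition ``every inscribed triangle has area $\le\delta$'' to ``the boundary arc is the circular arc of $D$'' is precisely the hard part and is only gestured at. A workable route, and essentially the classical one (Sas, 1939, of which Blaschke's theorem is the case $n=3$), avoids symmetrisation altogether: affine-normalise a longest chord of $S$ to $[-1,1]$, write the boundary as two graphs, parametrise $x=\cos t$, and average the area of the inscribed triangle with vertices at parameters $t,\,t+\tfrac{2\pi}{3},\,t+\tfrac{4\pi}{3}$ over $t\in[0,2\pi)$; the average equals $\frac{3\sqrt{3}}{4\pi}\lambda_2(S)$, so some triangle attains at least that area, giving $\mathrm{tr}(S)\ge\frac{3\sqrt{3}}{4\pi}\lambda_2(S)$, which is the stated bound. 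I would either carry out that averaging argument or simply cite the result, as the paper does.
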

\begin{proof} See \cite{Chakerian} or \cite{Pach}, Theorem $2.6$.
\end{proof}

Notice that for $\delta \leq \frac{3\sqrt{3}}{2}\approx 2.59$ the bound of the previous Lemma is $\leq 2\pi$. 
Blaschke's result settles the fourth statement of Theorem \ref{main}.
In fact, as is shown in the following result, one can prove a 
slightly better bound, by combining Blaschke's theorem with Theorem \ref{jung}.

\begin{prop}
If $S\subseteq \mathbb{R}^2$ is a \emph{convex}  $T(3,2)$-set of diameter $\delta$, then 
\[ \lambda_2(S) \leq \min\left\{\frac{\pi}{4}\cdot \frac{\delta^4}{\delta^2 -1}, \; 2\pi \right\} .\]
\end{prop}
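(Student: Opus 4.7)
The plan is to refine the argument that gave the fourth statement of Theorem \ref{main}, where Blaschke's bound $\lambda_2(S)\leq \frac{4\pi}{3\sqrt{3}}\,\text{tr}(S)$ was fed the \emph{naive} triameter estimate $\text{tr}(S)\leq \delta$. The improvement will come from replacing that naive estimate with a sharper one extracted from the generalised Jung theorem (Theorem \ref{jung}).

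First, since $S$ is a $T(3,2)$-set, we have $\text{diam}_3(S)\leq 2$, so Theorem \ref{jung} (with $\tau=2$) forces $S$ into a closed disk $D$ of radius
\[ \rho \;=\; \frac{1}{2}\cdot\frac{\delta^2}{\sqrt{\delta^2-1}} . \]
Any three points of $S$ lie in $D$, so the triangle they span has area at most that of the largest triangle inscribed in $D$; that triangle is equilateral with area $\tfrac{3\sqrt{3}}{4}\rho^{2}$. Hence $\text{tr}(S)\leq \tfrac{3\sqrt{3}}{4}\rho^{2}$.

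Since $S$ is convex, Theorem \ref{Blaschke} applies, and combining it with the above bound on the triameter gives
\[ \lambda_2(S) \;\leq\; \frac{4\pi}{3\sqrt{3}}\,\text{tr}(S) \;\leq\; \frac{4\pi}{3\sqrt{3}}\cdot\frac{3\sqrt{3}}{4}\rho^{2} \;=\; \pi\rho^{2} \;=\; \frac{\pi}{4}\cdot\frac{\delta^{4}}{\delta^{2}-1} . \]
The $2\pi$ term in the minimum is then inherited from Theorem \ref{mainprop}, via the same monotonicity remark used at the end of the proof of Theorem \ref{polar}: the maximal Lebesgue measure of a $T(3,2)$-set is a non-decreasing function of the diameter, and it is bounded by $2\pi$ for $\delta\geq 4$.

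The individual steps are routine once the ingredients are chosen, so I do not expect any genuine technical obstacle. The one point worth flagging honestly is that the chain of inequalities collapses to $\pi\rho^{2}$, i.e.\ to the \emph{trivial} bound ``area of $S$ is at most area of the enclosing disk supplied by Theorem \ref{jung}''. In other words, Blaschke's theorem produces no extra slack here beyond that trivial estimate; its presence is essentially presentational, matching the convex framework of the section. I would mention this explicitly, and I would also note that, comparing with the fourth statement of Theorem \ref{main}, the new bound $\frac{\pi}{4}\frac{\delta^{4}}{\delta^{2}-1}$ is an improvement over $\frac{4\pi}{3\sqrt{3}}\delta$ only on the initial part of the range $\delta>\frac{4}{\sqrt{3}}$, where the quantity $\frac{\delta^{3}}{\delta^{2}-1}$ stays below $\frac{16}{3\sqrt{3}}$.
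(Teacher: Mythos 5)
Your proof is correct and follows essentially the same route as the paper: Theorem \ref{jung} supplies the enclosing disk of radius $\rho=\frac{1}{2}\frac{\delta^{2}}{\sqrt{\delta^{2}-1}}$, the inscribed equilateral triangle bounds the triameter by $\frac{3\sqrt{3}}{4}\rho^{2}$, and Theorem \ref{Blaschke} then yields $\frac{\pi}{4}\frac{\delta^{4}}{\delta^{2}-1}$. Your side remark that this chain collapses to the trivial containment bound $\pi\rho^{2}$ is accurate and a fair observation, but it does not affect the validity of the argument.
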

\begin{proof} We may assume that $\delta > 4/\sqrt{3}$, since otherwise the first statement of Theorem \ref{main} applies. 
Theorem \ref{jung} implies that $S$ is contained in a disk of radius 
$\rho :=\frac{1}{2}\frac{\delta^2}{\sqrt{\delta^2 -1}}$. Now it is well-known, and easy to see, that a triangle 
of maximal area that is contained in a disk of radius $\rho$ is an equilateral one whose sides are 
equal to $\rho \sqrt{3}$. Therefore, the maximum area of a triangle that is contained in $S$ is at most 
$\frac{3\sqrt{3}}{16}\frac{\delta^4}{\delta^2 -1}$ and the result follows from Theorem \ref{Blaschke}. 
\end{proof}

Notice that for $\delta \leq 2.612$ we have $\frac{\pi}{4}\cdot \frac{\delta^4}{\delta^2 -1}\leq  2\pi$.
Let us remark that we were not able to find a $T(3,2)$-extremal and convex set of 
diameter $\delta\in (\frac{4}{\sqrt{3}}, 4)$. We conjecture that such an extremal set is given 
by the convex hull of the set formed by a segment of length $\delta$ and a unit disk centred 
at its midpoint.  
In the next section we consider Problem \ref{probone} under the additional assumption that 
the set $S$ is symmetric.

\section{Symmetric sets}\label{symmetric}

We call a subset, $S$, of the plane \emph{symmetric} (with respect to the origin) 
if it has the property that whenever $x\in S$, then also $-x\in S$. Note that the union of two 
disjoint unit disks as well as the union of two intersecting unit disks is a symmetric set.

\begin{thm} Let $S\subseteq \mathbb{R}^2$ be a symmetric $T(3,2)$-set.
Suppose the diameter of $S$ is $\delta > \frac{4}{\sqrt{3}}$. Then 
\[ \lambda_2(S) \leq \min\left\{\frac{\pi \delta^2}{6} + \frac{4\pi}{9}, 2\pi \right\} . \]
\end{thm}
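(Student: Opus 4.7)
The plan is to imitate the polar-coordinates argument of Theorem \ref{polar}, but replace the appeal to Theorem \ref{jung} with a sharper enclosing-disk bound that exploits the symmetry hypothesis. The key observation is that if $S$ is symmetric about the origin $O$, then for every $x\in S$ both $x$ and $-x$ lie in $S$, so
\[ 2|x| = |x-(-x)|\leq \mathrm{diam}(S) = \delta, \]
which gives $S\subseteq D(O,\delta/2)$. Note that this radius $\delta/2$ is strictly smaller than the Jung radius $\frac{1}{2}\frac{\delta^2}{\sqrt{\delta^2-1}}$ used in Theorem \ref{polar}, and that the hypothesis $\delta>4/\sqrt{3}$ ensures $\delta/2>2/\sqrt{3}$.

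With this in hand, I would switch to polar coordinates centered at $O$ and split the radial integral at the threshold $r=2/\sqrt{3}$:
\[ \lambda_2(S)=\int_0^{\delta/2}\lambda_1\bigl(S\cap C(r)\bigr)\,dr = \int_0^{2/\sqrt{3}} \lambda_1\bigl(S\cap C(r)\bigr)\,dr + \int_{2/\sqrt{3}}^{\delta/2}\lambda_1\bigl(S\cap C(r)\bigr)\,dr. \]
For the inner part I would use the trivial bound $\lambda_1(S\cap C(r))\leq 2\pi r$, producing $\int_0^{2/\sqrt{3}}2\pi r\,dr=4\pi/3$. For the outer part I would note that $S\cap C(r)$ is a $T(3,2)$-set on a circle of radius $r>2/\sqrt{3}$, so Lemma \ref{cycle} applies and gives $\lambda_1(S\cap C(r))\leq \frac{2}{3}\cdot 2\pi r$; integrating yields $\int_{2/\sqrt{3}}^{\delta/2}\frac{4\pi r}{3}\,dr = \frac{\pi\delta^2}{6}-\frac{8\pi}{9}$. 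Adding the two contributions produces $\frac{\pi\delta^2}{6}+\frac{4\pi}{9}$, as wanted.

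The remaining $2\pi$ term in the $\min$ follows from Theorem \ref{mainprop} combined with the monotonicity of the maximal measure of a $T(3,2)$-set as a function of diameter (invoked at the end of Theorem \ref{polar}): for $\delta\geq 4$ every $T(3,2)$-set of diameter $\delta$ already satisfies $\lambda_2(S)\leq 2\pi$, and for $\delta<4$ a symmetric extremal set may be regarded as sitting inside a (symmetric) set of diameter $4$. There is essentially no hard step: the whole proof is the computation of Theorem \ref{polar} with the sharper radius $\delta/2$; the only thing worth verifying carefully is that the center of symmetry can indeed serve as the origin of the polar-coordinate system, which is automatic from the bound $|x|\leq \delta/2$ derived above.
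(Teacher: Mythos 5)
Your proposal is correct and is essentially the paper's own argument: the paper likewise observes that symmetry forces $S\subseteq D(O,\delta/2)$ and then runs the polar-coordinate computation of Theorem \ref{polar} with the radius $\delta/2$ in place of the Jung radius, splitting at $r=2/\sqrt{3}$ and invoking Lemma \ref{cycle} for the outer annulus. Your treatment of the $2\pi$ term via Theorem \ref{mainprop} and monotonicity in the diameter also matches what the paper does.
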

\begin{proof} 
Since $S$ is symmetric it follows that it is contained in a disk of radius $\leq \delta/2$. 
Now using polar coordinates we can write
\[ \lambda_2(S) = \int_{0}^{\delta/2} \lambda_1(S\cap C(r))\; dr = \int_{0}^{\frac{2}{\sqrt{3}}} + \int_{\frac{2}{\sqrt{3}}}^{\delta/2}  \lambda_1(S\cap C(r))\; dr . \]
and the argument proceeds along the same lines as in Theorem \ref{polar}.
Note that for $\delta \leq \sqrt{28/3}\approx 3.055$ we have  
$\frac{\pi \delta^2}{6} + \frac{4\pi}{9} \leq 2\pi$. 
\end{proof}

Hence the fifth statement of Theorem \ref{main} follows. In the next section we state some 
results regarding
$T(a,b)$-extremal sets, for particular values of $a,b$.

\section{$T(a,2)$-extremal sets}
\label{abproperty}

We begin this section with the following version of Theorem \ref{mainprop}.

\begin{thm}\label{atwo}  Fix $a\geq 3$. Suppose that $S\subseteq \mathbb{R}^2$ is a $T(a,2)$-set.
Then we have $\lambda_2(S) \leq (a-1) \pi$. A $T(a,2)$-extremal set is a 
union of $a-1$ pairwise disjoint unit disks.
\end{thm}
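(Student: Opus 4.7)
My plan is to prove Theorem \ref{atwo} by induction on $a$, with the base case $a=3$ supplied by Theorem \ref{main}, which gives $\lambda_2(S) \leq 2\pi = (a-1)\pi$ for every planar $T(3,2)$-set regardless of diameter.

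The key inductive observation is that, for any choice of $p \in S$, the set $S \setminus D(p,2)$ is automatically a $T(a-1,2)$-set. Indeed, if $q_1,\dots,q_{a-1} \in S \setminus D(p,2)$ were pairwise at distance $>2$, then each $q_i$ would also be at distance $>2$ from $p$ (since $q_i \notin D(p,2)$), so $\{p,q_1,\dots,q_{a-1}\}$ would be an $a$-tuple of pairwise-far points in $S$, violating $T(a,2)$. Applying the inductive hypothesis then yields
\[
\lambda_2(S) \leq \lambda_2\bigl(S \cap D(p,2)\bigr) + (a-2)\pi,
\]
so the whole proof reduces to producing some $p \in S$ for which $\lambda_2(S \cap D(p,2)) \leq \pi$.

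To find such a $p$, I would first select, via Zorn's lemma if necessary, a maximal subset $\{p_1,\dots,p_k\} \subseteq S$ of pairwise-$>2$ points. The $T(a,2)$-property forces $k \leq a-1$, and by maximality every point of $S$ lies in some $D(p_i,2)$. If $k < a-1$, then $S$ is already $T(a-1,2)$ and the inductive hypothesis gives $\lambda_2(S) \leq (a-2)\pi$ directly. In the remaining case $k=a-1$, I would take $p = p_i$ and show that $S \cap D(p_i,2)$ has diameter at most $2$, whence the isodiametric inequality yields $\lambda_2(S \cap D(p_i,2)) \leq \pi$. Concretely, if $q,q' \in S \cap D(p_i,2)$ satisfied $|q - q'| > 2$, then together with the remaining anchors $\{p_j : j \neq i\}$ they would form an $a$-tuple of pairwise-far points in $S$, contradicting $T(a,2)$.

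The main obstacle is the pairwise-$>2$ check between $q$ (or $q'$) and a non-selected anchor $p_j$ in the previous step: the triangle inequality only gives $|q - p_j| \geq |p_i - p_j| - |q - p_i| \geq |p_i - p_j| - 2$, which exceeds $2$ exactly when $|p_i - p_j| > 4$. When the anchors happen to be clustered (some pairs at distance in $(2,4]$), the argument needs refinement---either by first extracting a subfamily of pairwise-$>4$ anchors and combining the pieces through a Voronoi-type partition of $S$, or by a separate case analysis that bounds the overlap regions $D(p_i,2) \cap D(p_j,2) \cap S$. Extremality of $a-1$ pairwise disjoint unit disks is then immediate: such a set has measure $(a-1)\pi$ and is $T(a,2)$ by pigeonhole (any $a$ points must include two in the same disk, hence at distance $\leq 2$), so the bound is sharp.
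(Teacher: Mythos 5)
Your reduction is attractive and partly sound: the observation that $S\setminus D(p,2)$ is a $T(a-1,2)$-set for \emph{every} $p\in S$ is correct and genuinely useful, and it does reduce the theorem (by induction on $a$, with base case $a=3$ supplied by the first three parts of Theorem \ref{main}) to producing a single point $p\in S$ with $\lambda_2\bigl(S\cap D(p,2)\bigr)\leq\pi$. But that is precisely the step you leave open, and your candidate for $p$ --- an anchor $p_i$ of a maximal pairwise-far family, for which you hope $\mathrm{diam}(S\cap D(p_i,2))\leq 2$ --- provably fails. Take $a=4$ and let $S$ be the union of three unit disks centred at the vertices of an equilateral triangle of side $5/2$. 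This is a $T(4,2)$-set of measure $3\pi$ (pigeonhole), and the three centres form a legitimate maximum pairwise-$>2$ family; yet for each centre $p_i$ the set $S\cap D(p_i,2)$ contains all of $D(p_i,1)$ \emph{plus} positive-measure lenses of the two neighbouring unit disks (their centres are at distance $5/2<3$ from $p_i$), so it has diameter $>2$ and measure strictly greater than $\pi$. A good $p$ does exist here --- push each centre outward by a unit vector away from the centroid and the three resulting points of $S$ are pairwise at distance $>4$, and for such a $p$ one checks $S\cap D(p,2)=D(p_i,1)$ --- but your recipe does not find it: the argument works for some maximal families and not others, and you give no criterion for choosing the right one. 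Your two suggested repairs do not close this: a pairwise-$>4$ subfamily of a given anchor family may consist of a single point, after which the covering property $S\subseteq\bigcup_i D(p_i,2)$ is lost, and the ``Voronoi-type partition'' is not specified. There is also a smaller slip: a Zorn-maximal pairwise-far family need not have maximum cardinality, so for the step ``$k<a-1$ implies $S$ is $T(a-1,2)$'' you must take a pairwise-far family of maximum size (which exists since every such family has at most $a-1$ elements), not merely an inclusion-maximal one.

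For comparison, the paper does not induct on $a$: it takes $a-1$ points of $S$ that are pairwise at distance $>4$, notes that then the disks $D(p_i,2)$ are disjoint, cover $S$, and meet $S$ in sets of diameter $\leq 2$, and applies the isodiametric inequality to each piece. That argument is clean exactly because the $>4$ separation is assumed from the outset; the configuration in which no such well-separated $(a-1)$-tuple is available is the genuinely hard case, and it is the same case your proposal flags as ``needing refinement.'' So the gap you identify is real, it is the heart of the theorem rather than a technicality, and the proposal as written does not close it.
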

\begin{proof} Notice that we make no assumption on the diameter of $S$. 
Among all $T(a,2)$-sets in the plane choose one 
for which 
there exist $a-1$ points, $p_1,\ldots,p_{a-1}\in S$ such that the distance between 
any two different points, $p_i, p_j$, satisfies $|p_i-p_j|> 4$;
an example of such a set is a union of $a-1$ pairwise disjoint unit disks.
We now show that this is an extremal set.
Note that the disks $D(p_i,2), i=1,\ldots,a-1$ are pairwise disjoint. 
As in Theorem \ref{mainprop}
we have $S\subseteq \cup_{i=1}^{a-1} D(p_i,2)$ since if there is any point of $S$ outside the union, then it 
would form together with the points   $p_1,\ldots,p_{a-1}$ a set of $a$ points whose pairwise distances 
are all $>2$.
Similarly, any two points in the set $S \cap D(p_i,2)$ are at distance $\leq 2$.
The isodiametric inequality finishes the proof.
\end{proof}

We remark that the previous proof works for higher-dimensional subsets, $S\subseteq \mathbb{R}^d,d\geq 2$.
Theorem \ref{atwo} can be employed in order to find $T(N_b,b)$-extremal sets, where $N_b$ is a particular 
positive integer that depends on $b$.  

\begin{thm} Fix positive integers $a,b$ such that $a\geq 3$ and $b\geq 2$. 
Set $N_b=(b-1)a-b+2$. Suppose that $S\subseteq \mathbb{R}^2$ is a $T(N_b,b)$-set.
Then $\lambda_2(S)\leq (a-1)\pi$. A $T(N_b,b)$-extremal set 
is a union of $a-1$ pairwise disjoint unit disks.
\end{thm}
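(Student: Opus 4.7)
The plan is to reduce to Theorem \ref{atwo} by establishing that every $T(N_b,b)$-set $S$ is also a $T(a,2)$-set; the measure bound $(a-1)\pi$ is then inherited directly. The extremal candidate $U$, the union of $a-1$ pairwise disjoint unit disks, is $T(N_b,b)$ by pigeonhole: among any $N_b = (b-1)(a-1) + 1$ of its points, some disk contains at least $b$ of them, and those are automatically pairwise at distance $\leq 2$.

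To prove the reduction, I would first replace $S$ by its closure $\bar S$ and then by the perfect core $S^* \subseteq \bar S$ obtained from the Cantor--Bendixson decomposition; both operations preserve the $T(N_b,b)$ property (the defining condition is closed under limits of $N_b$-tuples via a pigeonhole over the finitely many $b$-subsets of $\{1,\ldots,N_b\}$, and $S^* \subseteq \bar S$ is a subset), the closure only enlarges $\lambda_2$, and the discarded scattered part is countable, so $\lambda_2(S^*) = \lambda_2(\bar S) \geq \lambda_2(S)$; crucially, every point of $S^*$ is now a limit point of $S^*$. Suppose toward a contradiction that $S^*$ is not $T(a,2)$, so there exist $q_1, \ldots, q_a \in S^*$ with all pairwise distances strictly greater than $2$. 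Set $\eta := \min_{i \neq j} |q_i - q_j| - 2 > 0$ and fix $\epsilon \in (0, \eta/2)$. Since each $q_i$ is a limit point of $S^*$, for $i = 1, \ldots, a-1$ one can pick $b - 2$ distinct additional points $r_{i,1}, \ldots, r_{i,b-2} \in S^* \cap D(q_i, \epsilon) \setminus \{q_i\}$ and form the clusters $C_i = \{q_i, r_{i,1}, \ldots, r_{i,b-2}\}$ of size $b-1$; set also $C_a = \{q_a\}$. The union $C_1 \cup \cdots \cup C_a$ has $(a-1)(b-1) + 1 = N_b$ elements. Two points from distinct clusters $C_i, C_j$ lie at distance at least $|q_i - q_j| - 2\epsilon > 2$ by the triangle inequality, while each $|C_i| \leq b - 1 < b$, so no $b$ of these $N_b$ points are pairwise within distance $2$. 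This contradicts $T(N_b, b)$ for $S^*$.

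Hence $S^*$ is $T(a,2)$, Theorem \ref{atwo} yields $\lambda_2(S) \leq \lambda_2(\bar S) = \lambda_2(S^*) \leq (a-1)\pi$, and the union $U$ of $a-1$ pairwise disjoint unit disks realizes the bound. The main technical obstacle is the existence of the auxiliary points $r_{i,j}$ near each $q_i$; this is precisely what forces the reduction to the perfect core, ensuring every $q_i$ is a limit point of $S^*$ so that a tiny disk around it contains enough $S^*$-points to build the required clusters.
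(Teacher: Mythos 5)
Your proof is correct, but it takes a genuinely different route from the paper. The paper argues by induction on $b$: it takes a $T(N_b,b)$-set of \emph{maximal} measure, notes this measure is at least $(a-1)\pi$, and shows that if the set failed to be $T(N_{b-1},b-1)$, then the $N_{b-1}$ witness points together with any $a-1$ further points would violate $T(N_b,b)$ unless the set minus the witnesses were $T(a-1,2)$; Theorem \ref{atwo} then caps its measure at $(a-2)\pi$, contradicting maximality, so the set is $T(N_{b-1},b-1)$ and the induction hypothesis applies. You instead reduce in one step to the base case: after replacing $S$ by its closure and then its perfect (Cantor--Bendixson) kernel --- both steps preserving the $T(N_b,b)$ property and the measure up to a countable, hence null, discrepancy --- every point is a limit point, so $a$ points with pairwise distances $>2$ could each (for $a-1$ of them) be inflated into a cluster of $b-1$ points within $\epsilon<\eta/2$, producing $N_b=(a-1)(b-1)+1$ points no $b$ of which are pairwise within distance $2$, a contradiction; Theorem \ref{atwo} then gives the bound directly. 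Your clustering step is essentially the same combinatorial count as the paper's (the witnesses plus $a-1$ far-apart points form exactly $N_b$ points), but your argument buys something real: it avoids induction on $b$ and, more importantly, avoids the paper's unjustified assumption that a measure-maximizing $T(N_b,b)$-set exists, since you bound every set directly; the price is the point-set topology, which could even be lightened by using Lebesgue density points of a positive-measure set in place of the perfect kernel (the zero-measure case being trivial). The extremality of the union of $a-1$ disjoint unit disks is verified by the same pigeonhole observation in both treatments, which you spell out and the paper leaves as ``clearly''.
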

\begin{proof} 
We induct on $b$.
The base case, $b=2$, is treated in Theorem \ref{atwo}. Fix $b>2$ and
assume that the result holds true for all positive integers in the set $\{2,\ldots,b-1\}$ and any $a\geq 3$. 
Let $S$ be an $T(N_b,b)$-set  whose Lebesgue 
measure is maximal. Clearly, the union of $a-1$ pairwise disjoint 
unit disks is a $T(N_b,b)$-set; hence $\lambda_2(S)\geq (a-1)\pi$ and it remains to 
show that $\lambda_2(S)\leq (a-1)\pi$. We now claim that $S$ is a $T(N_{b-1},b-1)$-set;
the result then follows from the inductional hypothesis. 
To prove the claim, assume that $S$ is not an $T(N_{b-1},b-1)$-set and so there exist 
$N_{b-1}$ points in $S$, say $A_i,i=1,\ldots,N_{b-1}$, such that among any $b-1$ of them, 
there exists a pair of points whose distance is $>2$. 
Now choose $a-1$ points, say $p_1,\ldots,p_{a-1}$, in $S$ that 
are different from the points $A_i,i=1,\ldots,N_{b-1}$. 
We claim that among the points $p_1,\ldots,p_{a-1}$  there exists a pair whose distance is $\leq 2$.
To see this, suppose that all pairs from the set  $\{p_1,\ldots,p_{a-1}\}$ are at distance $>2$. 
Then each subset of  
the set $\{A_1,\ldots,A_{N_{b-1}},p_1\ldots, p_{a-1}\}$ of cardinality $b$ contains
a pair of points that are at distance $>2$, which
contradicts  the assumption that $S$ is a $T(N_b,b)$-set. 
Since the points $p_1,\ldots,p_{a-1}$ were arbitrary, we conclude that
the set $S\setminus \{A_1,\ldots,A_{N_{b-1}}\}$ is a $T(a-1,2)$-set.  
Now Theorem \ref{atwo} yields
$\lambda_2(S) = \lambda_2(S\setminus \{A_1,\ldots,A_{N_{b-1}}\})\leq (a-2)\pi$, which 
contradicts the maximality of 
the Lebesgue measure of $S$. Hence $S$ is a $T(N_{b-1},b-1)$-set and the result follows.
\end{proof}

Our paper ends with the next section in which  
we collect some open problems.

\section{Open problems}\label{conjecture}

So far we have formulated the "poisoning problem", Problem \ref{probone} and Problem \ref{circle}. 
These problems are, to our knowledge, 
generally open.  
Moreover, they are well defined on high-dimensional spaces; one has  to
replace perimeter by surface and $\mathbb{R}^2$ by $\mathbb{R}^d$ in their formulation. 
Let us briefly mention the existence of yet another poisoning problem, which is also open, where 
the poisoner bakes biscuits instead of a pie; the interested reader is referred to Fokkink et al. \cite{Fokkink} for 
details and references.  Let us also mention that a problem which is similar to the 
$3$-dimensional version of Problem \ref{circle} 
has been considered by B\'ela Bollob\'as in \cite{Bollobas},  but the main result therein has been withdrawn 
as incorrect, (see Balogh et al. \cite{Balogh}, page $47$).

In the two-dimensional case we saw that, given $\delta>0$,
an extremal $T(3,2)$-set of diameter $\delta$ is a disk of radius $\delta/2$, for $\delta\in (0,4/\sqrt{3}]$ and,
for $\delta\geq 4$, an extremal $T(3,2)$-set of diameter $\delta$ is the union of two unit disks. 
The reader may suspect that, in case 
$\delta\in (4/\sqrt{3}, 4)$, an extremal $T(3,2)$-set of diameter $\delta$ 
is given by the union of two intersecting unit disks; 
where the disks intersect 
in such a way that their union is a set of diameter $\delta$.  
We believe that this is indeed the case.

\begin{conj} Let $U_{\delta}$ be the union of two unit disks whose intersection is non-empty. Assume that
$\text{diam}(U_{\delta})= \delta \in (4/\sqrt{3},4)$. Then $U_{\delta}$ is 
a $T(3,2)$-extreamal set of diameter $\delta$.
\end{conj}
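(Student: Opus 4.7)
The plan is to reduce any $T(3,2)$-set $S$ with $\text{diam}(S) = \delta \in (4/\sqrt{3}, 4)$ to a direct comparison with $U_\delta$ via the structure forced by a diameter-realizing pair of points. Choose $p, q \in S$ with $|p-q| = \delta$. Since $\delta > 2$, for every $x \in S$ the triple $\{x,p,q\}$ must contain a pair at distance $\leq 2$, which forces $|xp| \leq 2$ or $|xq| \leq 2$, so $S \subseteq D(p,2) \cup D(q,2)$, exactly as in the proof of Theorem \ref{mainprop}. I would then partition $S = A_0 \sqcup B_0 \sqcup L$, where $A_0 = S \cap (D(p,2)\setminus D(q,2))$, $B_0 = S\cap(D(q,2)\setminus D(p,2))$ and $L = S \cap D(p,2) \cap D(q,2)$. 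For any $x, y \in A_0$ the triple $\{x,y,q\}$ has $|xq|, |yq|>2$, forcing $|xy|\leq 2$; hence $\text{diam}(A_0)\leq 2$ and symmetrically $\text{diam}(B_0)\leq 2$. In particular the isodiametric inequality gives $\lambda_2(A_0), \lambda_2(B_0)\leq \pi$.

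The bound $\pi$ is wasteful because $A_0$ is additionally confined to $D(p,2)\setminus D(q,2)$. Since a set of diameter $\leq 2$ is contained in some closed unit disk $D(m,1)$, and this disk must lie inside $D(p,2)$, we must have $|mp|\leq 1$. Write $c_1$ for the point on the segment $[p,q]$ at distance $1$ from $p$, and $c_2$ for its analogue near $q$; these are precisely the centres that realise $U_\delta$. The key geometric lemma I would establish is that, among all $m$ with $|mp|\leq 1$, the area $\lambda_2(D(m,1)\setminus D(q,2))$ is maximised at $m=c_1$; this is a convexity/monotonicity statement about lens areas which should follow by differentiating in $m$ along circles centred at $p$. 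Applied to $A_0$ (after enclosing it in a unit disk) and its mirror counterpart for $B_0$, this yields $\lambda_2(A_0) + \lambda_2(B_0) \leq \lambda_2(D(c_1,1)\setminus D(q,2)) + \lambda_2(D(c_2,1)\setminus D(p,2))$.

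It remains to bound $\lambda_2(L)$ by the measure of $L^{\ast} := U_\delta \cap D(p,2) \cap D(q,2) = (D(c_1,1)\cap D(q,2)) \cup (D(c_2,1)\cap D(p,2))$. The set $L$ is itself a $T(3,2)$-set (any triple in $L$ is a triple in $S$) and lies inside the lens $D(p,2)\cap D(q,2)$, so the natural tools are Theorem \ref{jung} applied to $L$ together with a polar-coordinates slicing centred at the midpoint of $[p,q]$, estimating each slice by Lemma \ref{cycle}. Carrying this out sharply should produce exactly $\lambda_2(L^{\ast})$.

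The main obstacle is this final step. The constraints on $A_0$ and $B_0$ decouple cleanly from $L$ because they are separated by the spheres of radius $2$ about $p$ and $q$, but the $T(3,2)$-property couples points of $L$ with points of $A_0\cup B_0$: a triple consisting of one point of $A_0$ and two points of $L$ may force additional distance constraints on $L$ that the sketch above uses only very weakly, and it is not obvious that these are sharp at $U_\delta$. A clean reduction very likely requires a symmetrisation argument \textemdash{} for instance a Steiner-type symmetrisation with respect to the line through $p$ and $q$, or a two-point rearrangement \textemdash{} that simultaneously preserves the diameter and the $T(3,2)$-property. Proving that $T(3,2)$-ness is preserved under any such operation is itself highly nontrivial (and may genuinely fail for naive choices), and I suspect that this is precisely why the statement is left only as a conjecture.
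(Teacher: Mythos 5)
First, note what you are up against: the paper does not prove this statement at all --- it is stated as a conjecture, and the author explicitly writes that he was unable to prove it. So there is no proof in the paper to compare with, and your proposal, by your own admission in the last paragraph, is not a proof either; it is a programme with the decisive steps missing. The opening reduction (choosing a diameter pair $p,q$, deducing $S\subseteq D(p,2)\cup D(q,2)$, and $\mathrm{diam}(A_0),\mathrm{diam}(B_0)\leq 2$) is correct and is exactly the argument of Theorem \ref{mainprop}; but that argument only yields $2\pi$ when the two disks are disjoint ($\delta\geq 4$), and everything beyond it in your sketch is either unproved or false as stated.

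Two concrete errors and one structural gap. (1) A set of diameter $\leq 2$ is \emph{not} contained in a unit disk: by Jung's theorem the sharp enclosing radius is $2/\sqrt{3}>1$, and an equilateral triangle of side $2$ witnesses this. So the very object $D(m,1)\supseteq A_0$ on which your ``key geometric lemma'' operates need not exist. (2) Even ignoring that, the lemma is backwards: among centres $m$ with $|mp|\leq 1$, the area $\lambda_2\bigl(D(m,1)\setminus D(q,2)\bigr)$ is an increasing function of $|mq|$, so it is maximised by pushing $m$ to the point of $D(p,1)$ \emph{farthest} from $q$, not at $c_1$, which is the closest such point to $q$; the statement you need is simply not a maximisation of that functional, and the missing ingredient is precisely how the constraint $A_0\subseteq D(p,2)$ and the interaction with $L$ force the extremal configuration toward $c_1$. (3) The step ``carrying this out sharply should produce exactly $\lambda_2(L^{\ast})$'' is pure hope: the only tools you invoke there are Theorem \ref{jung} and Lemma \ref{cycle}, and these are exactly the tools the paper already uses in Theorem \ref{polar}, where they produce a bound strictly larger than $\lambda_2(U_\delta)$; nothing in your sketch explains why the same machinery becomes sharp when restricted to the lens, nor why maximising the three pieces $A_0,B_0,L$ separately could ever certify that their maxima are simultaneously attainable, given (as you yourself note) that triples mixing $A_0\cup B_0$ with $L$ impose constraints you never use. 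In short, your honest assessment in the final paragraph is the correct one: the symmetrisation or rearrangement step that would make this rigorous is the open problem, and the proposal does not close it.
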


We were not able to prove this conjecture. This led us in search for bounds on the maximal measure 
and we succeeded in doing so by employing a generalisation of Jung's theorem, namely Theorem \ref{jung}.
Perhaps the reader will be able to prove, or disprove, the previous conjecture instantaneously. 
However the previous conjecture, provided it holds true, together with the first two statements of Theorem 
\ref{main} will only settle the case $a=3, b=2$ of Problem \ref{probone}. In case of generic $a$ and 
$b$ the problem seems elusive and, as a first step, one may consider the problem of obtaining 
upper bounds on the Lebesgue measure of $T(a,b)$-extremal sets.  
The approach of Theorem \ref{polar} could work in the general case, provided one can 
solve the following geometric problem that is interesting on its own. 
Our paper ends with a notion of generalised diameter of a set and the formulation of this problem. 

Fix positive integers, $a,b$ such that $a>b\geq 2$ and 
a subset, $S$, of $\mathbb{R}^d, d\geq 2$. If $\{p_1,\ldots,p_b\} \subseteq S$, is a subset of cardinality $b$, set
\[ \Delta(\{p_1,\ldots,p_b\}) = \max_{1\leq i<j\leq b} |p_i-p_j| \]
to be the maximum distance between the points $\{p_1,\ldots,p_b\}$.
Given a finite set of points, $F\subseteq S$, let $|F|$ denote its cardinality. If $|F|>b$ let 
\[ \binom{F}{b} := \{F'\subseteq F : |F'| =b \} \]
be the class consisting of all subsets of $F$ of cardinality $b$.
Finally, define the $(a,b)$-\emph{diameter} of $S$ by 
\[ \text{diam}_{a,b}(S) = \sup_{\{F\subseteq S: |F|=a\}}\;  \min_{F'\in\binom{F}{b}} \Delta(F') . \]

\begin{prbl} Fix $\delta>0$ and let $S$ be a subset of $\mathbb{R}^d,d\geq 2$, 
such that $\text{diam}(S)=\delta$. Assume that 
there exist pairs $\{(a_i,b_i)\}_{i=1}^{m}$ with $a_i>b_i\geq 2$
and positive real numbers $r_1,\ldots,r_m$ such that 
\[ \text{diam}_{a_i,b_i}(S) = r_i, \; \text{for} \; i=1,\ldots,m . \]
What is the smallest radius of a disk that contains $S$?
\end{prbl}

\section*{Acknowledgements} 
This work is supported by ERC Starting 
Grant 240186 "MiGraNT, Mining Graphs and Networks: a Theory-based approach". I am grateful to 
Themis Mitsis, Tobias M\"uller, Konstantinos Pelekis and 
Nikos Pelekis for many valuable discussions, comments and suggestions.

\end{document}